\RequirePackage{amsmath,amssymb}
\documentclass[runningheads]{llncs}
\usepackage{xcolor}
\usepackage{graphicx}
\usepackage{paralist}
\usepackage{orcidlink}
\usepackage{hyperref}
\usepackage{cleveref}

\bibliographystyle{splncs04}

\usepackage{lmodern}

\usepackage{tikz}
\usetikzlibrary{arrows}
\usetikzlibrary{positioning}
\usepackage{listings}
\usepackage[T1]{fontenc}
\usepackage[scaled=0.85]{beramono}

\usepackage[firstpage]{draftwatermark} %
\usepackage{etoolbox} %

\newbool{artifactfunctional}
\newbool{artifactreusable}
\newbool{artifactavailable}

\setbool{artifactfunctional}{true}
\setbool{artifactreusable}{true}
\setbool{artifactavailable}{true}

\newcommand{\artifactvposition}{14.2cm}

\SetWatermarkAngle{0}
\SetWatermarkText{\raisebox{\artifactvposition}{%
\hspace{0.1cm}%
\ifbool{artifactavailable}{%
\href{\artifacturl}{\includegraphics[width=2cm]{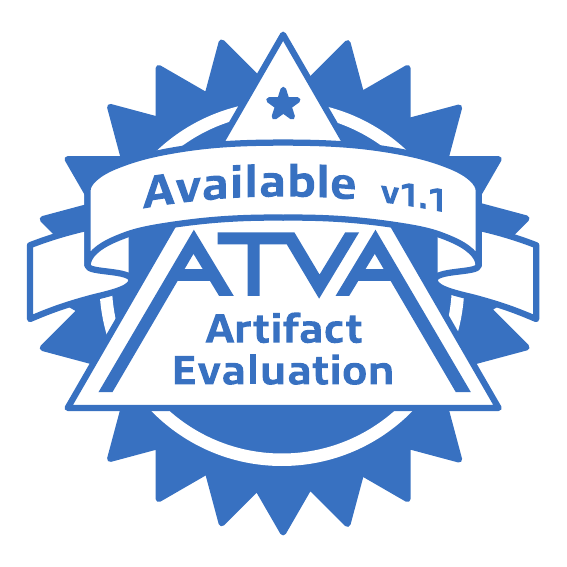}}}%
{\hspace{2cm}}%
\hspace{8.3cm}%
\ifbool{artifactfunctional}{%
\ifbool{artifactreusable}%
{\href{\artifacturl}{\includegraphics[width=2cm]{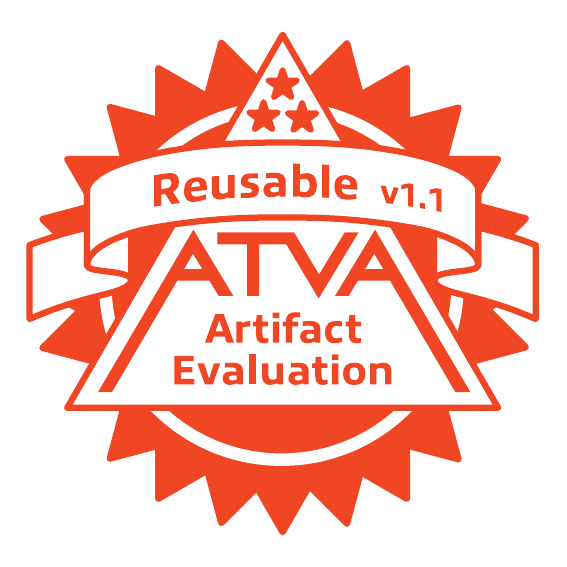}}}%
{\href{\artifacturl}{\includegraphics[width=2cm]{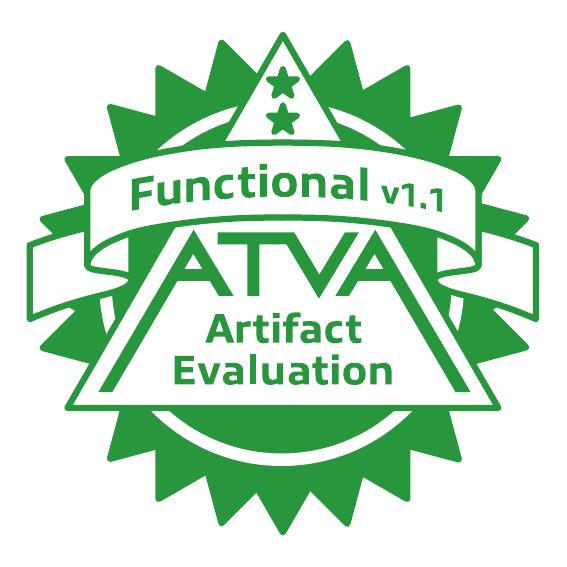}}}}%
{\hspace{2cm}}%
}}

\begin{document}

\newif\ifcomments

\commentstrue

\newif\ifconference
\conferencefalse

\ifcomments
\newcommand{\sharon}[1]{\textcolor{red}{SH: #1 }}
\newcommand{\raz}[1]{\textcolor{blue}{RL: #1 }}
\newcommand{\eden}[1]{{\color{purple}{EF: #1 }}}
\else
\newcommand{\sharon}[1]{}
\newcommand{\raz}[1]{}
\newcommand{\eden}[1]{}
\fi

\newcommand{\commentout}[1]{}
\newcommand{\mypara}[1]{\medskip \noindent \emph{#1}}

\newcommand{\updatename}[0]{high-low update}

\newcommand{\simul}[0]{H}
\newcommand{\idle}[0]{\mathrm{id}_\Sigma}
\newcommand{\sq}[0]{\eta}
\newcommand{\cond}[0]{\theta}
\newcommand{\formula}[0]{\alpha}
\newcommand{\formulaa}[0]{\beta}
\newcommand{\hint}{\gamma}
\newcommand{\assign}{v}

\newcommand{\struct}{s}
\newcommand{\structset}{\mathrm{struct}}
\newcommand{\dom}{\mathrm{dom}}
\newcommand{\FV}{\textit{FV}}
\newcommand{\Vars}{\textit{Vars}}
\newcommand{\fol}{\mathrm{fol}}
\newcommand{\para}{\mathrm{fin}}
\newcommand{\Tspec}{\mathcal T}
\newcommand{\Tpar}{T_\para}
\newcommand{\Tfol}{T_\fol}
\newcommand{\init}[0]{I}
\newcommand{\inv}[0]{J}

\newcommand{\size}{\mathrm{size}_{> k}}

\newcommand{\nat}{\mathbb{N}}
\newcommand{\domain}[0]{\mathcal D}
\newcommand{\interp}[0]{\mathcal I}
\newcommand{\terminterp}[3]{{#1}^{#2,#3}}
\newcommand{\safetyspec}[0]{\varphi}
\newcommand{\safetysem}[0]{P}
\newcommand{\ADrep}[2]{[{#2}]_{\setminus #1}}
\newcommand{\exc}[1]{#1_{\times}}
\newcommand{\ellpr}{{\ell\prime}}
\newcommand{\hpr}{{h\prime}}
\newcommand{\seq}[1]{{\vec #1}}

\newcommand{\closed}[2]{\mathrm{closed}_{#2}({#1})}
\newcommand{\closure}[1]{\mathrm{closure}_\Sigma({#1})}
\newcommand{\sort}[1]{\mathsf{#1}}
\newcommand{\mypyvy}{\texttt{mypyvy}}

\title{Proving Cutoff Bounds 
for Safety Properties 
in First-Order Logic}

\author{Raz Lotan\,\orcidlink{0009-0008-5883-5082} \and
Eden Frenkel\,\orcidlink{0009-0009-4589-2173} \and
Sharon Shoham\,\orcidlink{0000-0002-7226-3526}}

\authorrunning{Raz Lotan, Eden Frenkel and Sharon Shoham}

\institute{
Tel Aviv University, Tel Aviv, Israel\\
\email{lotanraz@tauex.tau.ac.il}\\}

\maketitle            

\begin{abstract}
First-order logic has been established as an important tool for modeling and verifying intricate systems such as distributed protocols and concurrent systems.
These systems are parametric in the number of nodes in the network or the number of threads, which is finite in any system instance, but unbounded.
One disadvantage of first-order logic is that it cannot distinguish between finite and infinite structures, leading to spurious counterexamples. 
To mitigate this, we offer a %
verification approach that captures only finite system instances.
Our %
approach is an adaptation of the cutoff method to systems modeled in first-order logic.
The idea is 
to show that any safety violation in a system instance of size larger than some bound can be simulated by a safety violation in a system of a smaller size.
The simulation provides an inductive argument for correctness in finite instances,
reducing the problem to showing safety of  instances with bounded size.
To this end, we develop a framework to (i)~encode such  simulation relations in first-order logic
and to (ii)~validate the simulation relation by a set of verification conditions given to an SMT solver. 
We apply our approach to verify safety of a set of examples, some of which cannot be proven by a first-order inductive invariant.

\end{abstract}

\section{Introduction}\label{sec:introduction}

Many recent works~\cite{local_reasoning,VeriCon,bhat_nagar,phase_structures,updr,pdr_alterations,ivy,induction_duality,duoAI,distAI} have used first-order logic for deductive or automatic verification of distributed protocols, concurrent programs, and other systems that operate over unbounded domains. 
Most of these works are based on the notion of an \emph{inductive invariant} as a way of verifying 
safety properties --- properties
of all reachable states of a system.
An inductive invariant is a property that holds in the initial states of the system and is preserved by the steps of the system. 
The system, its properties and its inductive invariants are all specified using first-order formulas, which allows to 
build on the success of first-order provers and SMT solvers, with or without theories, %
for automatically discharging the verification conditions that establish correctness of the system.

Using first-order logic for modeling and verifying systems and their inductive invariants has significant benefits. First, first-order logic provides a natural and expressive specification language 
that, thanks to uninterpreted relation and function symbols as well as quantifiers, 
is able to  model %
unbounded domains, such as network topologies, sets of messages, quorums of processes, %
and so on. 
In fact, first-order logic is Turing-complete as a specification language for systems \cite{ivy}. 
Second, first-order logic benefits from a plethora of automatic solvers that already exhibit impressive performance and continue to improve rapidly.

Unfortunately, %
using first-order logic for encoding and verifying systems and their inductive invariants also has several crucial downsides.
One obstacle is that it is not always possible to express a suitable inductive invariant in first-order logic. 
Another problem arises from the inability of first-order logic
to distinguish between finite and infinite structures.
This may lead to spurious infinite counterexamples to induction~\cite{infinite_models}, or even to spurious violations of the property, in cases where 
all finite instances of the system satisfy a desired property, but some infinite instances do not
(see \Cref{ex1}).  
Because we are usually only interested in finite, albeit unbounded, instances (e.g., a finite number of processes), violations caused by infinite instances are spurious.

In this work, we 
mitigate the aforementioned obstacles by (i)~considering a \emph{finite-domain} semantics for first-order transition systems that only includes the finite instances of the system (of any size), and 
(ii)~devising a deductive approach 
for establishing a \emph{cutoff} for a first-order transition system and a safety property 
under the finite-domain semantics
via a \emph{size-reducing simulation} between instances of the system.
The simulation relation is encoded and validated using first-order logic.
This allows us to take the finite-domain semantics into account in the verification process, while still using first-order solvers to discharge the resulting verification conditions.

Cutoffs are widely used in the context of parameterized systems. 
A parameterized system is a family of transition systems $T=\{T_n\}_n$, one for each natural number $n\in \nat$. The parameter $n$ can represent the number of nodes in a distributed network, the number of threads running a concurrent program, the length of an array or a linked list, etc. The parameterized system $T$ satisfies a property $P$ if $T_n$ satisfies $P$ for all $n$. 
A number $k$ is a cutoff for the system $T$ and the property $P$ if, to verify that $T$ satisfies $P$, it suffices to verify that $T_n$ satisfies $P$ for all $n\leq k$. Cutoffs therefore allow to reduce the verification of the unbounded system $T$ to verification of a system where $n$ is bounded by $k$. 
The cutoff method, introduced by~\cite{reasoning_about_rings},
proposes to establish that $k$ is a cutoff for $T$ 
by proving that for every $n>k$,  $T_n$ is simulated by $T_{n'}$ for some $n' < n$.

Much like a parameterized system, a first-order specification of a transition system induces a family of transition systems, except that instead of a numeric parameter, the instances are defined by the domain of their states. 
We can therefore adapt the cutoff method to verify all the finite instances of a 
first-order transition system (i.e., instances obtained by fixing a finite domain):
instead of a 
simulation between any $T_n$ for $n>k$ and some $T_{n'}$ with $n' < n$, we need to show a simulation between any instance with a finite domain of size $>k$ and some instance with a strictly smaller domain.

The key technical contribution of our work is a realization of the cutoff method for first-order transition systems using a size-reducing simulation relation between finite instances of the system. 
A size-reducing simulation relation can be understood as proving the induction step 
in a proof by induction on the size of the domain.
It therefore %
lifts correctness of  instances of size at most $k$ to all finite instances, but \emph{not} to infinite instances.
As a result, it may succeed to verify first-order specifications that only satisfy their property w.r.t.\ the finite-domain semantics and not w.r.t.\ the first-order semantics.
Moreover, as already demonstrated in previous work~\cite{squeezers}, 
verification by simulation (akin to induction on domain size) may sometimes be easier than verification based on inductive invariants (akin to induction on time), and can succeed in cases where an inductive invariant cannot be expressed in first-order logic. 

We propose a first-order encoding of a size-reducing simulation relation 
for a first-order transition system. We further show how to verify the correctness of such a (user-provided) encoding by a set of verification conditions that are encoded as first-order formulas and discharged by an automatic solver.

Our first challenge is  to ensure that the simulation relation is size-reducing in finite domains. While first-order logic can express the property that the size of a set is exactly $k$, at most $k$ or at least $k$, for any specific number $k$, it cannot directly compare the sizes of sets of arbitrary sizes.
Our solution is to obtain size reduction 
by defining the smaller domain as a subset of the larger domain obtained by deleting one element.
The element to be deleted is specified by a (user-provided) formula.
For finite instances, this ensures that the size of the domain decreases.  

The next challenge arises from the definition of a simulation relation, which  involves quantification over states:
\emph{every} successor state of the larger state must be simulated by \emph{some} successor state of the smaller state. 
When states are modeled using uninterpreted functions and relations, this 
leads to higher-order quantification.
To facilitate a first-order encoding of the verification conditions, we strengthen the definition of the simulation relation.
Namely, we replace the existential quantification over a simulating successor state of the smaller state
by universal quantification, stating that \emph{every} state that simulates the larger successor state is a successor of the smaller state. To ensure that the universal quantification is not satisfied vacuously, we require the (inductive) totality of the simulation relation.%

\paragraph{Contributions.} The main contributions of this paper are:
\begin{itemize}
    \item We define the finite-domain semantics of first-order specifications of transition systems, formulate the notion of cutoffs, and define size-reducing simulations as a way of proving cutoffs for finite-domain first-order transition systems %
    (\Cref{sec:cutoff}).
    \item We develop an approach for defining and validating size-reducing simulation relations in first-order logic (\Cref{sec:smt_encoding}).
    \item We implement a tool that receives as input a first-order specification of a transition system and a safety property, as well as a first-order encoding of a size-reducing simulation relation, and validates the simulation relation using the Z3~\cite{z3} SMT solver (\Cref{sec:evaluation}). 
    \item We demonstrate the applicability of the approach by 
    verifying a set of examples of distributed protocols and concurrent systems, including some that cannot be verified by classical techniques based on inductive invariants (\Cref{sec:evaluation}).
\end{itemize}
The rest of the paper is organized as follows: \Cref{sec:prelims} provides the necessary background on transition systems in first-order logic, 
\Cref{sec:cutoff,sec:smt_encoding,sec:evaluation} present the above contributions, \Cref{sec:related} discusses related work and \Cref{sec:conclusion} concludes the paper. 
\ifconference
Due to space considerations we defer all proofs to the full version of this paper \cite{full_version}.
\else 
We defer all proofs to \Cref{appendix:proofs}. 
\fi

\section{Preliminaries} \label{sec:prelims}

This section provides a brief background on transition systems, first-order logic and first-order specifications of transition systems and properties.

\paragraph{Transition Systems.}
a transition system $T$ is given by $T=(S,\init,R)$, where $S$ is the set of states, $\init \subseteq S$ is the set of initial states, and $R\subseteq S\times S$ is the transition relation. 
A trace of $T$ is a sequence of states $(s_i)_{i=0}^t$ such that $s_0\in \init$ and for every $0\leq i<t$ we have $(s_i,s_{i+1})\in R$.
We denote by $R^\star$ the reflexive transitive closure of $R$.
A state $s$ is reachable if there exists a trace $(s_i)_{i=0}^t$ such that $s_t = s$. A safety property is given by a set of states $P \subseteq S$.
We say that $T$ satisfies $P$, denoted $T \models P$, if every reachable state of $T$ is in $P$.
A set of states $\inv \subseteq S$ is an inductive invariant of $T$ if for any state $s\in \init$ we have $s\in \inv$ and for any two states $(s,s')\in R$ such that $s\in \inv$ we have $s'\in \inv$.
Note that if $\inv$ is an inductive invariant and $\inv \subseteq P$ then $T\models P$.

\paragraph{First-Order Logic.} 
We consider first-order logic with equality. 
For simplicity of the presentation, we consider the single-sorted, uninterpreted version of first-order logic, i.e., without background theories. However, in practice, we use  many-sorted first-order logic and allow some sorts and symbols to be interpreted by arbitrary background theories.
See \Cref{subsec:many_sorted} for more details.

A first-order vocabulary $\Sigma$ consists of relation and function symbols, where 
function symbols with arity zero are also called constant symbols.
Terms $t$ over $\Sigma$ are either variables $x$ or function applications $f(t_1,\ldots,t_m)$.
Formulas are defined recursively, where atomic formulas are either $t_1=t_2$ or $r(t_1,\ldots,t_m)$ where $r$ is a relation symbol.
Non-atomic formulas are built using connectives $ \neg, \wedge, \vee, \to $ and quantifiers $\forall,\exists$.
We denote by $\Vars(\formula)$ the variables of a formula $\alpha$, by $\FV(\formula)$ the free variables of $\formula$, and write $\formula(x_1,\ldots,x_m)$ when $\FV(\formula)\subseteq \{x_1,\ldots,x_m\}$. 
We denote a sequence $x_1,\ldots,x_m$ by $\seq x$ when $m$ is clear from context.

A structure for $\Sigma$ is a pair $\struct = (\domain,\interp)$ 
where $\domain$ is a domain (a nonempty set of elements), denoted $\dom(\struct)$, 
and $\interp$ is an interpretation that maps each relation and function symbol to an appropriate construct over the domain.
We denote by $\structset(\Sigma)$ the set\footnote{\label{footnote-sets}%
Technically, $\structset(\Sigma)$ is not a well-defined set in the sense of set theory (since we allow the domain of a structure to be an arbitrary set).  
However, for the sake of this paper we can restrict the domains to be subsets of $\nat$
without loss of generality. }
of all structures for $\Sigma$.
We define the cardinality (size) of $\struct$, denoted $|\struct|$, to be $|\dom(\struct)|$. 
For two structures $s=(\domain,\interp)$ and $\exc s=(\exc \domain, \exc \interp)$, we say that $\exc s$ is a substructure of $s$ if $\exc \domain \subseteq \domain$, for every relation $r\in \Sigma$ with arity $m$ we have $\exc \interp(r) = \interp(r) \cap \exc \domain^m $ and for every function $f\in \Sigma$ with arity $m$ we have $\exc \interp(f) = \interp(f) |_{\exc \domain^m}$.
For a formula $\formula$ with free variables $\FV(\formula)$, a structure $\struct=(\domain,\interp) $ and an assignment $\assign:\FV(\formula) \to \domain$ mapping each free variable in $\formula$ to an element from the domain, we write $\struct,\assign\models \formula$ to denote that the pair $(s,\assign)$ satisfies $\formula$, defined as usual.
For a variable $x$ and a domain element $d$, we write $[x\to d]$ for the assignment $\assign:\{x\}\to \domain$ with $\assign(x)=d$.
For two formulas $\formula,\formulaa$ (not necessarily closed), we write $ \formula \implies \formulaa$ to denote the validity of $\formula \to \formulaa$.

For a vocabulary $\Sigma$ and an identifier $o$, we denote by $\Sigma^o$ the vocabulary $\{a^o \mid a \in \Sigma\}$. We assume that $\Sigma \cap \Sigma^o = \emptyset$. 
For a formula (or term) $\formula$ over $\Sigma$ we denote by $\formula^o$ the formula (respectively, term) over $\Sigma^o$ obtained by substituting each symbol $a \in \Sigma$ by $a^o \in \Sigma^o$.
With abuse of notation, we sometimes consider an interpretation $\interp$ for $\Sigma$ as an interpretation for $\Sigma^o$ where $\interp(a^o) = \interp(a)$, or, similarly, an interpretation $\interp^o$ of $\Sigma^o$ as an interpretation for $\Sigma$.
\paragraph{First-Order Specifications of  Transition Systems.}
We consider transition systems given by a first-order logic specification $\Tspec=(\Sigma,\Gamma,\iota,\tau)$ where  $\Sigma$ is a vocabulary, $\Gamma$ is a closed formula over $\Sigma$ that acts as a finitely-axiomatizable background theory, $\iota$ is a closed formula over $\Sigma$ that specifies initial states, and $\tau$ is a closed formula defined over vocabulary $\Sigma\uplus\Sigma'$ that specifies the transition relation. Intuitively, the unprimed copy of $\Sigma$ specifies the pre-state of a transition, while the primed copy specifies the post-state.
Some symbols in $\Sigma$ may be declared immutable, in which case $\tau$ preserves their interpretation.

Any domain $\domain$ defines a transition system $T_\domain=(S_\domain,\init_\domain,R_\domain)$ where 
$S_\domain$ is the set of structures $s\in \structset(\Sigma)$ with $\dom(s) = \domain$ that satisfy the axioms, i.e., $s\models \Gamma$;
$\init_\domain$ is the set of states $s\in S_\domain$ that satisfy $\iota$; and the transition relation $R_\domain$ is the set of all pairs of states $\left((\domain,\interp),(\domain,\interp')\right)$ where, when we identify $\interp$ as an interpretation of $\Sigma$ and $\interp'$ as an interpretation of $\Sigma'$ we have $(\domain,\interp\uplus \interp')\models \tau$.

The standard first-order semantics of $\Tspec$ is then given by the transition system $\Tfol=(S_\fol,\init_\fol,R_\fol)$ where $S_\fol,\init_\fol,R_\fol$ are the (disjoint) union of their counterparts in $T_\domain$ over all domains $\domain$ 
(see \cref{footnote-sets} for well-definedness of the union). 
Note that every trace of $\Tfol$ is a trace of $T_\domain$ for some domain $\domain$. This is because 
$(s,s')\in R_\fol$ implies $\dom(s)=\dom(s')$. 
We refer to $T_\domain$ as a system instance of $\Tfol$.

Safety properties and inductive invariants are also specified by closed first-order formulas over $\Sigma$. A safety specification $\safetyspec$ induces a safety property $\safetysem = \{ s \in S_\fol \mid s \models \safetyspec\}$ for $\Tfol$. With abuse of notation we will also consider $\safetysem$ as a safety property of $T_\domain$ for any domain $\domain$ instead of writing $\safetysem \cap S_\domain$.

\section{The Cutoff Method for First-Order Transition Systems}\label{sec:cutoff}

The standard first-order semantics $\Tfol$ of a transition system specification $\Tspec$  is often an abstraction of the intended semantics in the sense that it considers as a state \emph{any} first-order structure (that satisfies $\Gamma$), whereas in many cases the intended semantics only considers finite structures. 
In this section we define a finite-domain semantics $\Tpar$ for $\Tspec$, where the size of states (structures) is finite but unbounded. The finite-domain semantics induces a parameterized system, where the parameter is the size of the (finite) domain.
We then formulate the cutoff method, based on the notion of a size-reducing simulation, in the context of first-order transition systems, as a way of reducing safety verification of $\Tpar$ 
to safety verification of a \emph{bounded} transition system, whose states are of size at most $k$.

\subsection{Finite-Domain Semantics of a First-Order Transition System}

The finite-domain semantics of a first-order specification of a transition system  $\Tspec$  
is given by the transition system $\Tpar=(S_\para,\init_\para,R_\para)$ where $S_\para,\init_\para,R_\para$ are defined similarly to the corresponding definitions in $\Tfol$, 
except that only \emph{finite} domains are included.
Note that a semantics that \emph{bounds} the domain size by some $k \in \nat$ can be captured  with the standard first-order semantics by adding to $\Gamma$ the formula  $\mathrm{size}_{\leq k}:=\exists x_1,\ldots,x_k. \forall x. \vee_i (x = x_i)$ as an axiom. However,   the same cannot be done to capture the finite-domain semantics since finiteness of the domain is not definable in first-order logic.

Every trace of $\Tpar$ is also a trace of $\Tfol$. Hence, every safety property that is satisfied by $\Tfol$ is also satisfied by $\Tpar$, but the converse need not hold, as demonstrated by the following example.

\begin{example}[TreeTermination]\label{ex1}  
Consider a simple termination detection protocol for nodes arranged in a rooted tree (which captures the second phase of a simple broadcast protocol~\cite{distributed_network_protocols}). Upon receiving an $\mathsf{ack}$ from all its children, a node terminates and sends an $\mathsf{ack}$ to its parent, except for the root, which does not have a parent. The protocol satisfies the safety property that if the root has declared termination then all nodes have terminated as well.

\lstdefinestyle{myStyle}{
    belowcaptionskip=1\baselineskip,
    breaklines=true,
    frame=none,
    numbers=none,
    basicstyle=\scriptsize \ttfamily,
    keywordstyle=\bfseries\color{green!40!black},
    commentstyle=\itshape\color{purple!40!black},  
    mathescape = true,
    identifierstyle=\color{black},
    stepnumber=1,
    firstnumber=1,
    numbers=left,
    numberstyle=\tiny,
    escapechar=@,
}
\lstdefinelanguage{ivy}{
	keywords=[1]{module, sort, type, types, relation, individual, def, constant, safety, update, hint, condition, bound, definition, transition, invariant, function, axiom,
		assume, init, require},
	keywordstyle=[1]{\bfseries},
	keywords=[2]{true,false},
	keywordstyle=[2]{\bfseries\color{purple}},
	keywords=[3]{mutable,immutable},
	keywordstyle=[3]{\bfseries},
	comment=[l]{\#}
}

\begin{figure}[t]
\centering

\lstset{style=mystyle,linewidth=0.43\textwidth}

    \begin{lstlisting}[language=ivy,multicols=2]
sort node
immutable constant root : node
immutable relation leq(node,node)
axiom $\forall $X. leq(root,X) $\wedge$ leq(X,X) @\label{line:order1}@
axiom $\forall $X,Y. leq(X,Y) $\wedge$ leq(Y,X) $\to$ X=Y@\label{line:order2}@
axiom $\forall $X,Y,Z. leq(X,Y) $\wedge$ leq(Y,Z)$ \ \ \ \ \ \ \to$ leq(X,Z) @\label{line:order3}@
axiom $\forall $X,Y,Z. leq(Y,X) $\wedge$ leq(Z,X) $\ \ \ \ \ \ \to$ leq(Y,Z) $\vee$ leq(Z,Y) @\label{line:cycle_freeness}@
def child(Y,X) := leq(X,Y) $\wedge$ X$\neq$Y $\wedge$          $\ \forall$Z.leq(Z,Y)$ \ \wedge$ Z$\neq$Y $\to$leq(Z,X)
relation termd(node)
relation ack(node,node)
init $\neg$termd(X)  
init $\neg$ack(X,Y)

transition terminate(n: node)
  assume $\forall$X. child(X,n) $\to$ ack(X,n)
  $\forall$X. termd'(X) = termd(X) $\vee$ (X = n)
  $\forall$X,Y. ack'(X,Y) = ack(X,Y) $\vee$ child(n,Y)
  
safety termd(root) $\to$ $\forall$X. termd(X)


\end{lstlisting}

    \caption{
    First-order specification of the TreeTermination protocol.\label{tree_spec}
}
\end{figure}

A first-order specification of this protocol is given in \Cref{tree_spec}. The tree structure is %
modeled as proposed in~\cite{padon_phd}, by a 
relation $\mathsf{leq}$, axiomatized to induce a tree, with the intention that $\mathsf{leq(X,Y)}$ indicates that $\mathsf{Y}$ is a descendant of $\mathsf{X}$ in the tree. The axioms in \cref{line:order1,line:order2,line:order3} ensure that $\mathsf{leq}$ is a
partial order %
with the root as its minimum
and \cref{line:cycle_freeness} guarantees the tree structure. The formula $\mathsf{child(Y,X)}$ indicates that $\mathsf{Y}$ is a child of $\mathsf{X}$ by asserting that $\mathsf{X}$ is a predecessor of $\mathsf{Y}$ in the order. %

For finite instances, the specification satisifies safety: any node $X$ in the tree is connected to the root by a finite sequence of predecessor steps. It follows by induction on the length of this sequence that if the root terminated, so did $X$.
The induction step follows from the precondition of $\mathsf{terminate}$, which ensures that a node terminates only after its children sent it acks, 
and so 
already terminated.

While this proof is correct, it uses induction on the length of tree paths, and as a consequence, relies on the finiteness of the tree. In fact, if we allow infinite structures, the safety property may be violated. 
A counterexample trace for an infinite instance of the protocol 
is given in \Cref{inf_cex}.
In the example, the partial order is in fact a total order that consists of infinitely many nodes, all of which are larger than the root, but for none the root is a predecessor.
The counterexample consists of a single step, where the root, which has no children and therefore satisfies the pre-condition of the termination transition vacuously, terminates, even though all the other nodes did not terminate. 

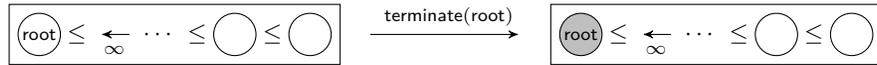
\begin{figure}[t]
\centering

\begin{tikzpicture}[->,>=stealth',shorten >=1pt,auto,scale=0.8]
		
		\tikzset{enode/.style={circle,draw,minimum size=16pt,inner sep=0mm,font=\scriptsize}}
		\tikzset{fun/.style={font=\tiny}}
		\tikzset{dots/.style={font=\small}}
		\tikzset{rel/.style={font=\small}}

            \draw[black] (2.5,2.5) rectangle (8,3.5);
            
		\node[enode]
		(e1)
		at (3,3)
		{ $\mathsf{root}$ };
            \node
		(leq1) at (3.625,3)
		{$\leq$};

            \node
		(leq2) at (5.625,3)
		{$\leq$};
            \node[enode]
		(e3) at (6.25,3)
            {};
            \node
		(leq3) at (6.875,3)
		{$\leq$};
            \node[enode]
		(e4) at (7.5,3)
            {};
		\node
		(e2)
            at (5,3)
            {$\cdots$};

            \node (einf) at (4.25,2.75) {\scriptsize $\infty$};

            \path [black] (4.5,3) edge  (4,3) {};

            \path (8.5,3) edge node { \scriptsize{ $\mathsf{terminate(root)}$} } (11,3);

            \draw[black] (11.5,2.5) rectangle (17,3.5);

\node[enode,fill=lightgray]
(e11)
		at (12,3)
		{ $\mathsf{root}$ };
            \node
		(leq11) at (12.625,3)
		{$\leq$};

            \node
		(leq21) at (14.625,3)
		{$\leq$};
            \node[enode]
		(e31) at (15.25,3)
            {};
            \node
		(leq31) at (15.875,3)
		{$\leq$};
            \node[enode]
		(e41) at (16.5,3)
            {};
		\node
		(e21)
            at (14,3)
            {$\cdots$};

            \node (einf1) at (13.25,2.75) {\scriptsize $\infty$};

            \path [black] (13.5,3) edge  (13,3) {};

	\end{tikzpicture}%

\caption{A counterexample to safety of TreeTermination over infinite structures, depicted as a trace of two states related by the transition relation: the first is an initial state, and the second violates safety. The order of nodes is depicted by $\leq$ and terminated nodes are depicted by shading. 
\label{inf_cex}} 
\end{figure}

In the sequel, we develop an approach that mimics induction over the state size and is able to verify the protocol for all finite instances by showing that every instance with more than $2$ nodes is simulated by an instance where some node other than the root is deleted.

\end{example}

\subsection{Establishing Cutoffs by Size-Reducing Simulations}

To verify first-order transition systems under the finite-domain semantics, we adapt the cutoff method to fit the first-order setting.
Fix a first order specification $\Tspec$, 
its corresponding transition systems $\{T_\domain\}_\domain$, its finite-domain semantics $\Tpar$, a safety specification $\safetyspec$, and the induced safety property $\safetysem$.
We define the notion of a cutoff 
for $\Tpar$ and $\safetysem$ as an adaptation of the standard definition~\cite{namjoshi2007}.

\begin{definition}\label{def:cutoff}
    A natural number $k$ is a \emph{cutoff} for $\Tpar$ and %
    $P$ if 
    \vspace{-0.2cm}
    \[
    \Tpar \models P  \qquad \Longleftrightarrow \qquad T_\domain \models P \text{ for all } \domain \text{ with } |\domain |\leq k.
    \]
\end{definition}

Generally speaking, cutoffs allow to reduce verification of parameterized systems to verification of bounded systems: if $k$ is a cutoff, then it suffices to verify  $T_\domain$ for domains $\domain$ with $|\domain| \leq k$ in order to deduce that $\Tpar \models P$.
Traditionally (e.g., in \cite{reasoning_about_rings}), cutoffs are investigated for \emph{classes} of transition systems and properties. 
Since \Cref{def:cutoff} considers cutoffs for a specific specification of a transition system and a specific safety property, a cutoff always exists: if $\Tpar\models P$, then $T_\domain\models P$ for any finite $\domain$, %
making any $k$ a cutoff. Similarly, if $\Tpar\not\models P$, then there exists a violating trace with some finite domain $\domain$, making $k = |\domain|$ a cutoff.
The challenge our work addresses, and what makes cutoffs useful for verification, is to establish a cutoff without checking whether $\Tpar \models P$.
To do so, we use the notion of a size-reducing simulation.

\begin{definition}\label{def:size_reducing_sim}
    A \emph{size-reducing simulation} for $ \Tpar$, $P$ and a bound $k$ is a relation $H\subseteq S_\para\times S_\para $ such that:
\begin{enumerate}
    \item Size Reduction: For every pair $(s^h,s^\ell)\in \simul$ we have $|s^\ell| < |s^h|$.  
    \label{item:size}
    \item Initiation: For every state $s^h\in \init_\para$ with $|s^h| > k$ there exists $s^\ell\in  \init_\para$ such that $(s^h,s^\ell)\in \simul$. \label{item:initation}
    \item Simulation: For every pair $(s^h,s^\ell)\in \simul$ and $s^{h\prime} \in S_\para$ such that $(s^h,s^{h\prime})\in R_\para$ there exists $s^{\ell\prime}\in S_\para$ such that $(s^{h\prime},s^{\ell\prime})\in \simul$ and $(s^\ell,s^{\ell\prime})\in R_\para^\star$. \label{item:simulation}
    \item Fault Preservation: For every pair $(s^h,s^\ell)\in \simul$, if $s^h\notin P$ then $s^\ell\notin P$. \label{item:bad}
\end{enumerate}
\end{definition}

\noindent Usually (e.g., in~\cite{reasoning_about_rings,namjoshi2007}) simulation relations for showing cutoffs are defined between different instances of the parameterized system. %
Instead, a size-reducing simulation is a self-simulation of $\Tpar$, which is more natural in this context.

\begin{theorem}\label{theorem:soundness}
    If $\simul$ is a size-reducing simulation for $\Tpar$, $P$  and $k$, then $k$ is a cutoff for $\Tpar$ and $P$.
\end{theorem}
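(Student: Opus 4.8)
The plan is to prove the contrapositive of the interesting direction: assuming $\Tpar \not\models P$, I will show that some instance $T_\domain$ with $|\domain| \leq k$ already violates $P$. (The other direction is trivial: if $T_\domain \models P$ for all $\domain$ with $|\domain| \leq k$, this is implied whenever $\Tpar \models P$; and conversely $\Tpar \models P$ obviously gives $T_\domain \models P$ for every finite $\domain$, so the right-to-left implication of \Cref{def:cutoff} is what needs work, which is exactly the contrapositive statement above. Actually the two implications of the biconditional: left-to-right is immediate since every $T_\domain$ with finite $\domain$ is ``contained'' in $\Tpar$; right-to-left is the content.) So suppose $\Tpar \not\models P$, i.e.\ there is a finite domain $\domain_0$ and a reachable state $s \in S_{\domain_0}$ with $s \notin P$. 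Among all such violations, pick one whose domain has \emph{minimal} size; call this size $m$ and the corresponding bad reachable state $s^h$, with a witnessing trace $\pi = (s_0, s_1, \ldots, s_t = s^h)$ in $\Tpar$ (equivalently in $T_{\domain}$ for $\domain = \dom(s^h)$, since $\Tpar$-traces stay within a single instance).

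The key step is to argue $m \leq k$. Suppose for contradiction $m > k$. I will use the size-reducing simulation $\simul$ to manufacture a strictly smaller violation, contradicting minimality. By \textbf{Initiation} (clause~\ref{item:initation}), since $s_0 \in \init_\para$ and $|s_0| = m > k$, there is $s_0^\ell \in \init_\para$ with $(s_0, s_0^\ell) \in \simul$. Now I push the pair along the trace $\pi$ using \textbf{Simulation} (clause~\ref{item:simulation}): inductively, given $(s_i, s_i^\ell) \in \simul$ and the transition $(s_i, s_{i+1}) \in R_\para$, the Simulation clause yields $s_{i+1}^\ell \in S_\para$ with $(s_{i+1}, s_{i+1}^\ell) \in \simul$ and $(s_i^\ell, s_{i+1}^\ell) \in R_\para^\star$. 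Concatenating the resulting $R_\para^\star$-segments from $s_0^\ell$ (which is initial) shows $s_t^\ell$ is reachable in $\Tpar$. By \textbf{Fault Preservation} (clause~\ref{item:bad}), since $s^h = s_t \notin P$ and $(s_t, s_t^\ell) \in \simul$, we get $s_t^\ell \notin P$. Finally, by \textbf{Size Reduction} (clause~\ref{item:size}), $|s_t^\ell| < |s_t| = m$. Hence $s_t^\ell$ is a reachable bad state living in a finite domain of size strictly less than $m$ --- contradicting the minimality of $m$. Therefore $m \leq k$, and $T_{\dom(s^h)} \not\models P$ with $|\dom(s^h)| = m \leq k$, which is what we wanted.

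I expect the only real subtlety to be bookkeeping around the transitive closure: a single simulation step replaces one $R_\para$-edge by an $R_\para^\star$-path, so after composing $t$ steps the lower trace is a concatenation of $R_\para^\star$-segments, which is again an $R_\para^\star$-path; and since $s_0^\ell \in \init_\para$, $s_t^\ell$ is genuinely reachable (one should note $R_\para^\star$ includes the reflexive case, so some segments may be empty, which is fine). A second point worth stating explicitly is that all the intermediate states $s_i^\ell$ produced along the way are themselves in $S_\para$, i.e.\ have finite domains --- this is automatic because $\simul \subseteq S_\para \times S_\para$ and $R_\para$ only relates finite-domain states, so we never leave the finite-domain semantics. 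No well-foundedness machinery beyond ``minimum of a nonempty set of natural numbers'' is needed, since domain sizes are natural numbers. This makes the argument short once the contrapositive framing and the minimal-counterexample choice are in place.
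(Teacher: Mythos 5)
Your proof is correct and follows essentially the same route as the paper's: both take a reachable bad state with minimal domain size, apply Initiation to seed a simulating low initial state, push the simulation relation along the high trace step by step (accumulating $R_\para^\star$-segments), and then use Fault Preservation plus Size Reduction to manufacture a strictly smaller violation, contradicting minimality. The only cosmetic difference is that you frame the nontrivial direction as a contrapositive while the paper phrases it as a proof by contradiction; the underlying argument and the order in which the four simulation clauses are invoked are identical.
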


Intuitively, the theorem holds because we can use $H$ to translate a trace violating $P$ in system instances of (finite) size larger than $k$ to a (perhaps shorter/longer) trace violating $P$ in a system instance with a smaller domain. Then, continuing by induction, we get a violating trace of size smaller than $k$. Thus if no counterexample to safety exists for system instances whose domain size is at most $k$, no counterexample exists for $\Tpar$, making $k$ a cutoff.

Our goal is to encode size-reducing simulations in first-order logic. 
As written, \cref{item:initation,item:simulation} involve alternation between universal and existential quantification over states. Since states are first-order structures, explicitly quantifying over them requires higher-order quantification. 
Note, however, that if a claim only involves  universal quantification over states, the quantification can remain implicit by checking (first-order) validity. 
We use this observation to strengthen \Cref{def:size_reducing_sim} to a \emph{strong} size-reducing simulation, 
by over-approximating the existential quantification using universal quantification together with a totality requirement that guarantees the universal quantification is not vacuous.

Namely, the existential quantification specifies the existence of a state $s^{\ell\prime}$ that 
simulates $s^{h\prime}$ and is a successor of $s^{\ell}$. When turning it into universal quantification, we can either quantify over all successor states and require that they are all simulating, or vice versa. We choose the latter option. 
We also strengthen the requirement $(s^\ell,s^\ellpr)\in R_\para^\star$ to $(s^\ell,s^\ellpr)\in R_\para$ or $s^\ell = s^\ellpr$, hence removing 
the existential quantification on intermediate states.
This is illustrated in \Cref{fig:simulation_diagram}.
The totality check then requires that a simulating state exists initially and that it continues to hold inductively. These still involve existential quantification over states, and we will see in the next section how to reduce them to first-order quantification for the specific class of size-reducing relations that we use.

\begin{definition}\label{def:strong_size_reducing_sim}    
    A \emph{strong size-reducing simulation} for  $\Tpar$, $P$ and $k$ is a relation $H\subseteq S_\para\times S_\para$ such that:
\begin{enumerate}
    \item Size Reduction: For every pair $(s^h,s^\ell)\in \simul$ we have $|s^\ell| < |s^h|$.  
    \label{item:strong_size}
    \item Strong Initiation:  For every pair $(s^h,s^\ell)\in \simul$, if $s^h\in \init_\para$ then $s^\ell \in \init_\para$. \label{item:strong_initation}
    \item Strong Simulation: For all pairs $(s^h,s^\ell)\in \simul$ and $(s^{h\prime},s^{\ell\prime})\in \simul$, if $(s^h,s^{h\prime})\in R_\para$ 
    and $\dom(s^\ell)=\dom(s^{\ell\prime})$
    then $(s^\ell,s^{\ell\prime})\in R_\para$ or $s^\ell=s^{\ell\prime}. $\label{item:strong_simulation}
    \item Fault Preservation: For every pair $(s^h,s^\ell)\in \simul$, if $s^h\notin P$ then $s^\ell\notin P$. \label{item:strong_bad}
    \item \label{item:strong_inductive_totality} Inductive Totality: \begin{itemize}
        \item For every $s^h \in \init_\para$ with $|s^h|>k$ there exists $s^\ell\in S_\para$ s.t.\ $(s^h,s^\ell)\in \simul$.
        \item For every pair $(s^h,s^{h\prime})\in R_\para$, if there exists $s^\ell$ s.t.\ $(s^h,s^\ell)\in H$ then there exists $s^{\ell\prime}\in S_\para$ s.t.\ $(s^{h\prime},s^{\ell\prime}) \in H $ and $\dom(s^\ell)=\dom(s^\ellpr)$.
    \end{itemize}
\end{enumerate}
\end{definition}

Note that in the strong simulation requirement we do not consider all states $s^\ellpr$ that simulate $s^\ell$ but only those that have $\dom(s^\ell)=\dom(s^\ellpr)$, since each trace is confined to one system instance. As a result, we also need to require the domain equality in the totality of \cref{item:strong_inductive_totality}.

\begin{figure}[t]
\usetikzlibrary{arrows.meta}

    \centering
\begin{tikzpicture}[->,>=stealth',shorten >=1pt,auto,scale=0.8]
		
		\tikzset{enode/.style={circle,draw,minimum size=18pt,inner sep=0mm}}

  		\tikzset{dnode/.style={circle,draw,dashed,minimum size=18pt,inner sep=0mm}}
		\tikzset{fun/.style={font}}
		\tikzset{dots/.style={font=}}
		\tikzset{rel/.style={font}}

        \node [enode]
		(e1)
		{ ${s^h}$ };

            \node
            (e0)
            [left = 0.5cm of e1]
            {\Cref{def:size_reducing_sim}};

            \node [enode]
		(e2)
		[right = 1.5cm of e1]
		{ $s^{h\prime}$ };

            \path (e1) edge node {$R$} (e2); 

            \node [enode]
		(e3)
            [below = of e1]
		{ ${s^\ell}$ };

            \node [dnode]
		(e4)
		[right = 1.5cm of e3]
		{ $s^{\ell\prime}$ };

            \node
		(exists)
		[right = 0.2pt of e4]
		{ $\exists$ };

            \path (e3) edge [dashed] node {$R^\star$} (e4); 
		\path (e1) edge node[swap] {$H$} (e3); 
  		\path (e2) edge [dashed] node {$H$} (e4); 

            \node [enode]
		(e5)
            [right = 6cm of e1]
		{ ${s^h}$ };

            \node
            (e0)
            [left = 0.5cm of e5]
            {\Cref{def:strong_size_reducing_sim}};

            \node [enode]
		(e6)
		[right = 1.5cm of e5]
		{ $s^{h\prime}$ };

            \path (e5) edge node {$R$} (e6); 

            \node [enode]
		(e7)
            [below = of e5]
		{ ${s^\ell}$ };

            \node [enode]
		(e8)
		[right = 1.5cm of e7]
		{ $s^{\ell\prime}$ };

            \path (e7) edge [dashed] node {$R^=$} (e8); 
		\path (e5) edge node[swap] {$H$} (e7); 
  		\path (e6) edge node {$H$} (e8); 

            \draw[-Implies,line width=0.5pt,double distance=2.5pt] (4,-1) -- (5.5,-1);

 \end{tikzpicture}%
    \caption{Strengthening of simulation to strong simulation: solid nodes are universally quantified while the dashed node is existentially quantified; solid edges represent assumptions and dashed edges represent conclusions. $R^=$ is a shorthand for $R 
    \cup \mathrm{id}_S
    $.}
    \label{fig:simulation_diagram}
\end{figure}
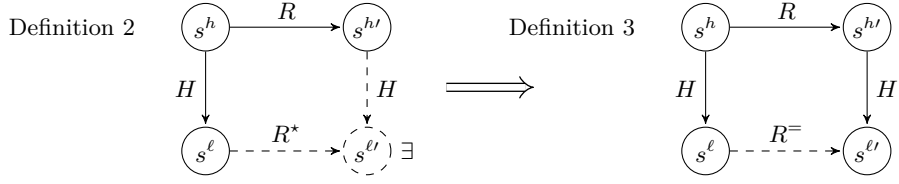

\begin{lemma}\label{lem:strong_is_weak}
    A strong size-reducing simulation is a size-reducing simulation. 
\end{lemma}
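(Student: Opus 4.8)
The plan is to take a relation $H$ satisfying the five conditions of \Cref{def:strong_size_reducing_sim} and verify the four conditions of \Cref{def:size_reducing_sim} one by one. Two of these are free of charge: Size Reduction (\cref{item:size}) and Fault Preservation (\cref{item:bad}) of \Cref{def:size_reducing_sim} are verbatim \cref{item:strong_size,item:strong_bad} of \Cref{def:strong_size_reducing_sim}, so there is nothing to prove for them. All the content lies in deriving the two conditions that quantify existentially over states — Initiation (\cref{item:initation}) and Simulation (\cref{item:simulation}) — from the universal conditions of \Cref{def:strong_size_reducing_sim} together with the Inductive Totality clause (\cref{item:strong_inductive_totality}).

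For Initiation, I would start from $s^h \in \init_\para$ with $|s^h| > k$. The first bullet of \cref{item:strong_inductive_totality} supplies some $s^\ell \in S_\para$ with $(s^h,s^\ell)\in H$, and then Strong Initiation (\cref{item:strong_initation}), applied to this pair, promotes $s^\ell$ into $\init_\para$; this is exactly the witness demanded by \cref{item:initation}. For Simulation, I would start from $(s^h,s^\ell)\in H$ and $s^{h\prime}$ with $(s^h,s^{h\prime})\in R_\para$. Instantiating the second bullet of \cref{item:strong_inductive_totality} with $s^\ell$ as the hypothesized totality witness at $s^h$ yields $s^{\ell\prime}\in S_\para$ with $(s^{h\prime},s^{\ell\prime})\in H$ and $\dom(s^\ell)=\dom(s^{\ell\prime})$. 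Feeding the two pairs $(s^h,s^\ell),(s^{h\prime},s^{\ell\prime})\in H$, the step $(s^h,s^{h\prime})\in R_\para$, and the domain equality into Strong Simulation (\cref{item:strong_simulation}) gives $(s^\ell,s^{\ell\prime})\in R_\para$ or $s^\ell=s^{\ell\prime}$; either way $(s^\ell,s^{\ell\prime})\in R_\para^\star$, so $s^{\ell\prime}$ is the witness required by \cref{item:simulation}.

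I do not expect a real obstacle: the whole point of the strengthening is that Strong Initiation and Strong Simulation are universal statements whose conclusion's "witness" is precisely what Inductive Totality hands over, and Inductive Totality is specifically designed to return a witness $s^{\ell\prime}$ with $\dom(s^\ell)=\dom(s^{\ell\prime})$, matching the side-condition that Strong Simulation needs. The only points deserving care are bookkeeping ones: instantiating the existence hypothesis of the second totality bullet with the correct $s^\ell$ (the one coming from $(s^h,s^\ell)\in H$), and observing that the disjunction $(s^\ell,s^{\ell\prime})\in R_\para \lor s^\ell=s^{\ell\prime}$ lands inside $R_\para^\star$ since the latter is the reflexive transitive closure of $R_\para$.
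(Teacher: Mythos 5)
Your proposal is correct and follows essentially the same route as the paper's proof: both dispatch Size Reduction and Fault Preservation by noting they are identical, derive Initiation from the first bullet of Inductive Totality followed by Strong Initiation, and derive Simulation by applying the second bullet of Inductive Totality to produce the domain-matched witness $s^{\ell\prime}$ and then invoking Strong Simulation to conclude $(s^\ell,s^{\ell\prime})\in R_\para^\star$.
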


The converse of \Cref{lem:strong_is_weak} does not hold. In fact,  a transition system may have a size-reducing simulation but no strong size-reducing simulation.   
Moreover, while size-reducing simulations are as powerful as inductive invariants as a proof rule~\cite{namjoshi2007}, the proof does not carry over to strong size-reducing simulations.

\section{First-Order Encoding of Size-Reducing Simulations}\label{sec:smt_encoding}

In this section we show how to encode and validate a strong size-reducing simulation in first-order logic. 
To circumvent the need to express and verify size reduction in first-order logic, 
we consider a certain class of strong size-reducing simulations, where the pairs  $(s^h,s^\ell)$ in the relation are such that $s^\ell$ is obtained from $s^h$ by deleting an element in its domain and updating the interpretation of all symbols. %
We propose a first-order encoding of such relations and show how to validate that the relation defined by this encoding is a strong size-reducing simulation using a set of verification conditions expressible in first-order logic.

For the remainder of the section, fix 
a first-order specification $\Tspec=(\Sigma,\Gamma,\iota,\tau)$, a safety specification $\safetyspec$ and a natural number $k$. Let $\Tpar$ be the finite-domain semantics associated with $\Tspec$ and $P$ the safety property associated with $\safetyspec$.
    
\subsection{Encoding Size-Reducing Relations Using Substructures}

A strong size-reducing simulation is a relation between states that satisfies certain requirements. 
To specify a relation between states 
(structures) we rely on the standard approach of using a formula over 
a double vocabulary $\Sigma^h \uplus \Sigma^\ell$, where the interpretation of $\Sigma^h$ represents the ``high'' state $s^h$, and the interpretation of $\Sigma^\ell$ represents the ``low'' state $s^\ell$. However, since a model of a formula (even when it is  a two-vocabulary formula) has a single domain,  such an encoding does not account for size reduction of the domain. 
Our observation is that we can enforce size reduction by deleting an element from the  domain.
Thus, we use a formula $\sq(z)$ over  $\Sigma^h \uplus \Sigma^\ell$ to specify a size-reducing relation, where the deleted element is specified using the free variable $z$.
In a model of $\sq(z)$, the interpretation of $\Sigma^h$ represents $s^h$ and the state $s^\ell$ is a substructure of the interpretation of $\Sigma^\ell$ with the same domain elements as $s^h$, excluding the element assigned to $z$.
We formalize this by the following definition.

\begin{definition}\label{def: relation_from_sq}
Let $\sq(z)$ be a formula over 
$\Sigma^h \uplus \Sigma^\ell$, $(\domain, \interp^h \uplus \interp^\ell)$ a structure over $\Sigma^h \uplus \Sigma^\ell$ and $d_0 \in \domain$ such that $(\domain,\interp^h\uplus\interp^\ell),[z\mapsto d_0]\models \sq(z)$. Then for $\exc\domain = \domain \setminus\{d_0\}$, if there exists an interpretation $\exc {\interp^\ell}$ such that $(\exc\domain,\exc {\interp^\ell})$ is a substructure of $(\domain,\interp^\ell)$, we say that $s^\ell=(\exc\domain,\exc {\interp^\ell})$ is an \emph{$\sq$-substructure} of $s^h=(\domain,\interp^h)$.
The \emph{size-reducing relation} of $\sq$ is the relation 
\[
\simul_\sq = \left\{(s^h,s^\ell) \in  \structset(\Sigma)\times \structset(\Sigma) \mid 
s^\ell \text{ is an } \sq\text{-substructure of  }s^h \right\}.
\]

\end{definition}

\noindent 
An example of a pair $(s^h,s^\ell)\in H_\eta$ is depicted in \Cref{fig:substructure} along with the structure $(\domain, \interp^h \uplus \interp^\ell)$ and the element $d_0\in \domain$ that induces them, assuming that $(\domain, \interp^h \uplus \interp^\ell),[z \mapsto d_0] \models \sq(z)$.
In the example: $\domain = \{1,2,3,4\}$, $d_0=3$, 
$\interp^h= \left\{ 
\mathsf{leq} :  \{ (1,2),(1,3),(3,4),(1,4) \},
\mathsf{root} : 1,
\mathsf{ack} : \{(4,3)\},
\mathsf{termd} : \{4\}
\right\}$
and 
$\interp^\ell = \left\{ 
\mathsf{leq} :  \{ (1,2),(1,3),(3,4),(1,4) \},
\mathsf{root} : 1,
\mathsf{ack} : \{(4,1)\},
\mathsf{termd} : \{4\}
\right\}$.
This induces the pair $(s^h,s^\ell)\in \simul_\sq$ where $s^h=(\domain,\interp^h)$, $s^\ell = (\exc\domain,\exc\interp^\ell)$, $\exc \domain = \{1,2,4\}$ and
$\exc\interp^\ell = \left\{ 
\mathsf{leq} :  \{ (1,2),(1,4) \},
\mathsf{root} : 1,
\mathsf{ack} : \{(4,1)\},
\mathsf{termd} : \{4\}
\right\}$. The choice $d_0=4$ would give a different pair in $\simul_
\sq$.

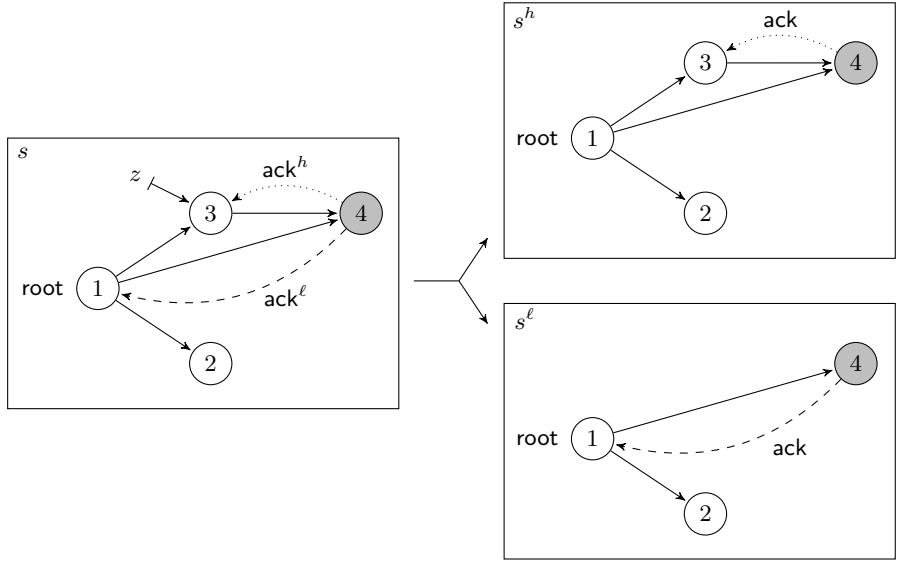
\begin{figure}[ht]
    \usetikzlibrary{arrows.meta}

    \centering

\begin{tikzpicture}[->,>=stealth',shorten >=1pt,auto,scale=0.8]
		
\tikzset{enode/.style={circle,draw,minimum size=16pt,inner sep=0mm}}
		\tikzset{fun/.style={font}}
		\tikzset{dots/.style={font=}}
		\tikzset{rel/.style={font}}

    \node [enode] (e1) { $1$ };
    \node (s) [above of =  e1,xshift=-1cm,yshift=0.8cm] { $s$ };
    \node (root) [left = 0.05cm of e1] {$\mathsf{root}$};
    \node [enode] (e2) [below of=e1,xshift=1.5cm] { $2$ };
    \node [enode] (e3) [above of=e1,xshift=1.5cm] { $3$ };
    \node [enode,fill=lightgray] (e4) [right of=e3, xshift=1cm] { $4$ };
    \node (ackl) at (3.125,-0.125) {$\mathsf{ack}^\ell$};

    \path (e1) edge node[swap] {} (e2);
    \path (e1) edge node {} (e3);
    \path (e3) edge node {} (e4);
    \path (e1) edge node[swap] {} (e4);

    \path (e4) edge [dotted,bend right] node [anchor=south] {$\mathsf{ack}^h$} (e3);
    \path (e4) edge [dashed,bend left] node {} (e1);

    \node (e5) [left of = e3, yshift = 0.5cm] { $z$ };
    \path (e5) edge [|->] (e3);

    \node [enode] (e1high) [right =6cm of e1, yshift = 2cm] { $1$ };
    \node (shigh) [above of =  e1high,xshift=-0.9cm,yshift=0.6cm] { $s^h$ };
    \node [enode] (e2high) [below of=e1high,xshift=1.5cm] { $2$ };
    \node [enode] (e3high) [above of=e1high,xshift=1.5cm] { $3$ };
    \node [enode,fill=lightgray] (e4high) [right of=e3high, xshift=1cm] { $4$ };
    \node (root) [left = 0.05cm of e1high] {$\mathsf{root}$};

    \path (e1high) edge node[swap] {} (e2high);
    \path (e1high) edge node {} (e3high);
    \path (e3high) edge node {} (e4high);
    \path (e1high) edge node[swap] {} (e4high);

    \path (e4high) edge [dotted,bend right] node [anchor=south] {$\mathsf{ack}$} (e3high);

    \node [enode] (e1low) [right =6cm of e1, yshift = -2cm] { $1$ };
    \node (slow) [above of =  e1low,xshift=-0.9cm,yshift=0.6cm]  { $s^\ell$ };

    \node [enode] (e2low) [below of=e1low,xshift=1.5cm] { $2$ };
    \node (e3low) [above of=e1low,xshift=1.5cm] {};
    \node [enode,fill=lightgray] (e4low) [right of=e3low, xshift=1cm] { $4$ };
    \node (root) [left = 0.05cm of e1low] {$\mathsf{root}$};

    \path (e1low) edge node[swap] {} (e2low);
    \path (e1low) edge node[swap] {} (e4low);

    \path (e4low) edge [dashed,bend left] node {} (e1low);
    \node (ackl) at (11.5,-2.625) {$\mathsf{ack}$};

    \draw[black] (-1.5,-2) rectangle (5,2.5);
    \draw[black] (6.75,0.5) rectangle (13.25,4.75);
    \draw[black] (6.75,-4.5) rectangle (13.25,-0.25);

    \draw[black] (5.25,0.125) rectangle (6,0.125);
    \draw [->] (6,0.125) -- (6.5,0.875);
    \draw [->] (6,0.125) -- (6.5,-0.625);

 \end{tikzpicture}%
     \caption{
    Left: a structure $s=(\domain, \interp^h \uplus \interp^\ell)$ and an assignment $[z \mapsto d_0]$ that satisfy $\sq(z)$. The $h$ and $\ell$ copies of $\mathsf{root}$,  $\mathsf{leq}$ and $\mathsf{termd}$ are interpreted in the same way,  so these interpretations are depicted jointly. 
    $\mathsf{leq}(\cdot,\cdot)$ is depicted by solid lines and $\mathsf{termd}(\cdot)$ is depicted by shading. The interpretation of the two copies of  $\mathsf{ack}(\cdot,\cdot)$ are depicted by dashed and dotted arrows.
    Right: the derived high and low structures: $s^h=(\domain,\interp^h)$ and $s^\ell=(\exc \domain, \exc\interp^\ell)$, with the relations and constants depicted in the same way. 
        \label{fig:substructure}
    }
\end{figure}

\begin{lemma}\label{lem:size_reduction}
    For a formula $\sq(z)$ over $\Sigma^h\uplus\Sigma^\ell$, and a pair $(s^h,s^\ell)\in \simul_\sq$ such that $s^h$ is finite, we have $|s^\ell| < |s^h|$.
\end{lemma}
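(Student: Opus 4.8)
The plan is to simply unfold the definitions; there is no real mathematical content beyond a cardinality observation. First I would invoke \Cref{def: relation_from_sq}: the assumption $(s^h,s^\ell)\in\simul_\sq$ means precisely that $s^\ell$ is an $\sq$-substructure of $s^h$. Unpacking that, there is a structure $(\domain,\interp^h\uplus\interp^\ell)$ over $\Sigma^h\uplus\Sigma^\ell$ and an element $d_0\in\domain$ with $(\domain,\interp^h\uplus\interp^\ell),[z\mapsto d_0]\models\sq(z)$, such that $s^h=(\domain,\interp^h)$ and $s^\ell=(\exc\domain,\exc{\interp^\ell})$, where $\exc\domain=\domain\setminus\{d_0\}$ and $(\exc\domain,\exc{\interp^\ell})$ is a substructure of $(\domain,\interp^\ell)$. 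In particular, reading off the domains, $\dom(s^h)=\domain$ and $\dom(s^\ell)=\domain\setminus\{d_0\}$, and by definition $d_0$ is a genuine element of $\domain=\dom(s^h)$.

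Next I would use the hypothesis that $s^h$ is finite: this says $\domain$ is a finite set. Since $d_0\in\domain$, removing it strictly decreases the cardinality, i.e.\ $|\domain\setminus\{d_0\}|=|\domain|-1$. Therefore $|s^\ell|=|\dom(s^\ell)|=|\domain|-1<|\domain|=|\dom(s^h)|=|s^h|$, which is the claim.

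There is no genuine obstacle here; the one point worth flagging, rather than a difficulty, is that the finiteness hypothesis on $s^h$ is essential and not merely cosmetic: for an infinite domain, deleting a single element leaves the cardinality unchanged, so the $\sq$-substructure construction would fail to be size-reducing. This is exactly why the cutoff development is carried out under the finite-domain semantics $\Tpar$ rather than the standard first-order semantics $\Tfol$.
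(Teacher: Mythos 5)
Your proof is correct and follows the paper's own argument exactly: unfold Definition~\ref{def: relation_from_sq} to extract $d_0$ with $\dom(s^\ell)=\dom(s^h)\setminus\{d_0\}$, then use finiteness of $\dom(s^h)$ to conclude $|s^\ell|=|s^h|-1<|s^h|$. The closing remark about why finiteness is essential is not in the paper's proof but is accurate and well taken.
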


\Cref{lem:size_reduction} states that, when  restricting to finite structures,  $\simul_\sq$ is size-reducing. This holds by construction.
In the sequel, we define a set of verification conditions that guarantee the rest of the required properties of \Cref{def:strong_size_reducing_sim}.

In order to encode the verification conditions, we need to be able to reason about the states  $(s^h,s^\ell) \in \simul_\sq$.  
The state $s^h$ is captured by $\sq(z)$ directly. However, 
$s^\ell$ is defined by a substructure that excludes the element assigned to $z$ from the domain of $\sq(z)$. As a result, we need to exclude $z$ when reasoning about $s^\ell$. 
For this purpose, we define, for a formula $\formula$ and a variable $z\notin \Vars(\formula)$, the $z$-excluded representation, denoted $\ADrep z \formula $.
The $z$-excluded representation is
the formula obtained from $\formula$ by adding to each quantifier a guard that excludes $z$.
This ensures that the evaluation of $\ADrep z \formula $ on a structure with domain $\domain$  with $z$ assigned to $d_0\in \domain$ is the same as the evaluation of $\formula$ on a substructure with domain $\domain \setminus \{d_0\}$ (if such a substructure exists, see \Cref{subsec:guaranteeing_totality}).

\begin{definition}
    For a formula $\formula$ and a variable $z\notin \Vars(\formula)$, the $z$-excluded representation of $\formula$, denoted $\ADrep{z}{\formula}$, is a formula defined inductively as follows:
    \begin{itemize}
        \item $\ADrep{z}{\formula}=\formula $ if $\formula=r(t_1,\ldots,t_m)$ or $\formula = (t_1 = t_2)$
        \item $\ADrep{z}{\neg \formulaa} = \neg \ADrep{z}{\formulaa}$
        \item $\ADrep z {\formulaa_1 \bowtie \formulaa_2 } = \ADrep{z}{\formulaa_1} \bowtie \ADrep{z}{\formulaa_2} $ for $\bowtie\ \in \{\wedge,\vee,\to \} $
        \item $\ADrep{z}{\forall x. \formulaa} = \forall x. (x\neq z) \to \ADrep{z}{\formulaa} $
        \item $\ADrep{z}{\exists x. \formulaa} = \exists x. (x\neq z) \wedge \ADrep{z}{\formulaa} $
    \end{itemize}
\end{definition}

The following lemma formalizes the  aforementioned relation between 
the value of the $z$-excluded representation of $\formula$ in a structure
and 
the value of $\formula$ in its substructure that excludes the value of $z$.

\begin{lemma}\label{lem:domain_extension}
Let $\formula$ be a closed formula over $\Sigma$ and $z\notin \Vars(\formula)$ a variable.
Let $(\domain,\interp)$ be a structure for $\Sigma$, $d_0\in \domain$ and $(\exc\domain,\exc\interp)$  a substructure of $(\domain,\interp)$ such that $\exc \domain = \domain \setminus \{d_0\} $. Then %
    $
    (\exc \domain,\exc \interp) \models \formula \iff (\domain, \interp),[z\mapsto d_0] \models \ADrep{z}\formula.
    $
\end{lemma}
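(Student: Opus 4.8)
The plan is to prove the biconditional by structural induction on the formula $\formula$, but with one adjustment: since the statement fixes $\formula$ to be closed while the inductive hypothesis must apply to subformulas with free variables, I would first generalize the statement. Specifically, I would prove that for every formula $\formula$ over $\Sigma$ with $z \notin \Vars(\formula)$, every structure $(\domain,\interp)$, every $d_0 \in \domain$, every substructure $(\exc\domain,\exc\interp)$ of $(\domain,\interp)$ with $\exc\domain = \domain\setminus\{d_0\}$, and every assignment $\assign : \FV(\formula) \to \exc\domain$, we have
\[
(\exc\domain,\exc\interp), \assign \models \formula \iff (\domain,\interp), \assign[z\mapsto d_0] \models \ADrep{z}{\formula}.
\]
The original lemma is the special case $\FV(\formula) = \emptyset$. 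Note the assignment $\assign$ ranges into $\exc\domain$, so it never uses $d_0$; this is what makes the induction go through.

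The first key step, before the induction on formulas, is a sublemma on terms: for every term $t$ over $\Sigma$ with variables assigned by $\assign : \Vars(t) \to \exc\domain$, the value $\terminterp{t}{\exc\domain}{\exc\interp}$ under $\assign$ equals $\terminterp{t}{\domain}{\interp}$ under $\assign$ (i.e.\ the two agree, and in particular the value lies in $\exc\domain$). This follows by induction on the term structure, using the definition of substructure: for a variable it is immediate since $\assign$ maps into $\exc\domain$; for $f(t_1,\ldots,t_m)$ it uses $\exc\interp(f) = \interp(f)|_{\exc\domain^m}$ together with the inductive hypothesis that each $t_i$ evaluates into $\exc\domain$ to the same value. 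Closure of $\exc\domain$ under the interpreted functions is exactly what the substructure condition guarantees.

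With the term sublemma in hand, the formula induction is routine. For atomic formulas $r(t_1,\ldots,t_m)$ and $t_1 = t_2$: $\ADrep{z}{\formula} = \formula$, and the claim reduces to the term sublemma plus $\exc\interp(r) = \interp(r) \cap \exc\domain^m$ (so membership in $\exc\interp(r)$ agrees with membership in $\interp(r)$ for tuples from $\exc\domain$). The Boolean cases $\neg$, $\wedge$, $\vee$, $\to$ are immediate from the inductive hypothesis since $\ADrep{z}{\cdot}$ commutes with connectives. The quantifier cases are the point of the guard: for $\forall x.\formulaa$, the left side quantifies $x$ over $\exc\domain$, while the right side evaluates $\forall x. (x \neq z) \to \ADrep{z}{\formulaa}$ over $\domain$ — the guard $x \neq z$ restricts the effective range to $\domain \setminus \{d_0\} = \exc\domain$, so the two quantifications range over the same set; then apply the inductive hypothesis to $\formulaa$ with the extended assignment $\assign[x \mapsto d]$ for $d \in \exc\domain$ (which still maps $\FV(\formulaa)$ into $\exc\domain$, and still leaves $z$ free to be set to $d_0$ — here we use $z \notin \Vars(\formula)$, so $z \ne x$ and there is no clash). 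The $\exists$ case is symmetric.

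The main obstacle is not any single step but rather getting the bookkeeping right: one must carefully track that $z$ never appears in $\formula$ (so the guards $x \neq z$ are well-formed and non-capturing), that the working assignment always maps the free variables into $\exc\domain$ (never hitting $d_0$) so that the term sublemma applies and the substructure's interpretations are the relevant ones, and that the substructure relation is inherited correctly — in particular that $\exc\domain$ is genuinely closed under $\interp(f)$ for all function symbols, which is built into \Cref{def: relation_from_sq}'s use of ``substructure'' but must be invoked explicitly in the term sublemma. Once the generalized statement is set up with the assignment ranging into $\exc\domain$, each case is a one-line verification.
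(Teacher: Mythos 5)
Your proposal is correct and takes essentially the same approach as the paper's proof: both generalize the claim to open formulas with an assignment into $\exc\domain$, establish the term sublemma by structural induction using the substructure conditions, and then discharge the formula cases (atomic via the term sublemma and $\exc\interp(r)=\interp(r)\cap\exc\domain^m$, connectives trivially, quantifiers by observing the guard $x\neq z$ cuts the range down to $\exc\domain$). The only cosmetic difference is that you treat all connectives and both quantifiers uniformly, whereas the paper works out $\forall$ explicitly (splitting on $d=d_0$ vs.\ $d\neq d_0$) and derives the remaining cases.
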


\subsection{Validating Simulation}

Given a formula $\sq(z)$ over $\Sigma^h\uplus\Sigma^\ell$, we now go on to define a set of verification conditions for $\Tspec$, $\safetyspec$, $k$
and $\sq(z)$ that ensure that
$\simul_\sq\cap (S_\para \times S_\para)$ is a strong size-reducing simulation for $\Tpar$, $\safetysem$ and $k$. 
The full set of verification conditions is listed in \Cref{fig:verificationsconditions}.
In this subsection we introduce three of them, 
which guarantee %
strong initiation, strong simulation and fault preservation (\cref{item:strong_initation,item:strong_simulation,item:strong_bad} of \Cref{def:strong_size_reducing_sim}).
Having already ensured size reduction (\cref{item:strong_size}) in \Cref{lem:size_reduction},  
the next subsection addresses the final requirement: inductive totality (\cref{item:strong_inductive_totality}).

In all verification conditions, we use the appropriate copies of the vocabulary to refer to high states $(\Sigma^h)$ and low states $(\Sigma^\ell)$ and to pre-states (unprimed) and post-states (primed). We use $\sq(z)$ to encode pairs of states related by simulation and the $z$-excluded representation to encode properties of the low state in a pair.
For this purpose, we assume w.l.o.g.\ %
that the variable $z$ does not appear in the system and safety specification $\Gamma,\iota,\tau,\safetyspec$ (this can be achieved by renaming). 

Verification conditions (\ref{item:vc-iota}) and (\ref{item:vc-safety}) are  straight forward encodings of strong initiation and fault preservation, respectively.

Verification condition (\ref{item:vc-tau}) establishes strong simulation. 
It requires that any transition between high-states is reflected by a transition between their simulating low states or by a stutter, provided that the low states have the same domain.
The assumption that the %
high and low states are related by the simulation relation is encoded by asserting $\sq(z)$ (for the pre-states) and $\sq'(z)$ (for the post-states).
Crucially, we use the same variable $z$ to ensure that we refer to low states with the same domain. 
We encode a stutter by the formula $\idle$ given by
$\idle := \bigwedge_{r\in \Sigma} \left( \forall \seq x.\ r'(\seq x) \leftrightarrow r(\seq x) \right) \wedge \bigwedge_{f\in \Sigma} (\forall \seq x.\ f'(\seq x) = f(\seq x)).$

\begin{lemma} \label{lem:vc_simulation}
Let $\sq(z)$ be a formula over $\Sigma^h\uplus \Sigma^\ell$. If verification conditions (\ref{item:vc-iota})--(\ref{item:vc-safety}) 
of \Cref{fig:verificationsconditions} are valid then $\simul_{\sq}\cap (S_\para \times S_\para)$ satisfies \cref{item:strong_initation,item:strong_simulation,item:strong_bad} of \Cref{def:strong_size_reducing_sim}.
\end{lemma}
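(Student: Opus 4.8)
The plan is to prove the three implications separately, each time unfolding the relevant verification condition and translating its first-order validity into the semantic statement of \Cref{def:strong_size_reducing_sim} via \Cref{lem:domain_extension}. Throughout, fix a pair $(s^h,s^\ell)\in \simul_\sq\cap (S_\para\times S_\para)$. By \Cref{def: relation_from_sq}, there is a finite domain $\domain$, an interpretation $\interp^h\uplus\interp^\ell$ and an element $d_0\in\domain$ such that $s^h=(\domain,\interp^h)$, $(\domain,\interp^h\uplus\interp^\ell),[z\mapsto d_0]\models\sq(z)$, and $s^\ell$ is the substructure of $(\domain,\interp^\ell)$ with domain $\domain\setminus\{d_0\}$. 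This is the common setup that connects each verification condition (a validity over $\Sigma^h\uplus\Sigma^\ell$ with free variable $z$, possibly with primed copies) to a concrete pair of states and the deleted element witnessing it.

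First I would handle strong initiation (\cref{item:strong_initation}). Assuming verification condition (\ref{item:vc-iota}) is valid, I instantiate it on the structure $(\domain,\interp^h\uplus\interp^\ell)$ with $[z\mapsto d_0]$. The antecedent $\sq(z)\wedge\iota^h$ holds: $\sq(z)$ by the setup, and $\iota^h$ because $s^h\in\init_\para$ means $(\domain,\interp^h)\models\iota$, which (since $z\notin\Vars(\iota)$) is the same as $(\domain,\interp^h\uplus\interp^\ell),[z\mapsto d_0]\models\iota^h$. Hence the consequent, which should be $\ADrep{z}{\iota^\ell}$, holds at $[z\mapsto d_0]$; by \Cref{lem:domain_extension} this is exactly $s^\ell\models\iota$, and since $s^\ell$ is finite and satisfies $\Gamma$ (I should check the VC also ensures $\ADrep{z}{\Gamma^\ell}$, or argue it separately — this is the one bookkeeping point to be careful about), $s^\ell\in\init_\para$. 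Fault preservation (\cref{item:strong_bad}) is entirely analogous using verification condition (\ref{item:vc-safety}): if $s^h\notin P$ then $s^h\models\neg\safetyspec$, which gives $\neg\safetyspec^h$ in the two-vocabulary structure; the VC forces $\neg\ADrep{z}{\safetyspec^\ell}$, and \Cref{lem:domain_extension} converts this to $s^\ell\models\neg\safetyspec$, i.e.\ $s^\ell\notin P$.

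The main work is strong simulation (\cref{item:strong_simulation}). Here I take pairs $(s^h,s^\ell)\in\simul_\sq\cap(S_\para\times S_\para)$ and $(s^{h\prime},s^{\ell\prime})\in\simul_\sq\cap(S_\para\times S_\para)$ with $(s^h,s^{h\prime})\in R_\para$ and $\dom(s^\ell)=\dom(s^{\ell\prime})$. The key observation — and the crux of why the same variable $z$ is used in $\sq(z)$ and $\sq'(z)$ in verification condition (\ref{item:vc-tau}) — is that $\dom(s^\ell)=\dom(s^{\ell\prime})$ forces the two deleted elements to coincide: since $s^\ell$ is obtained from $(\domain,\interp^h)$ by deleting $d_0$ and $s^{\ell\prime}$ from $(\domain,\interp^{h\prime})$ by deleting some $d_0'$, and both share the domain $\domain$ (as $\dom(s^h)=\dom(s^{h\prime})=\domain$ because $R_\para$ preserves the domain), equality of $\domain\setminus\{d_0\}$ and $\domain\setminus\{d_0'\}$ yields $d_0=d_0'$. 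I then build a single structure over $\Sigma^h\uplus\Sigma^{\ell}\uplus\Sigma^{h\prime}\uplus\Sigma^{\ell\prime}$ with domain $\domain$, interpreting the unprimed copies by the pre-state data and the primed copies by the post-state data, and assignment $[z\mapsto d_0]$. On this structure the antecedent of (\ref{item:vc-tau}) holds: $\sq(z)$, $\sq'(z)$, $\Gamma^h,\Gamma^{h\prime}$, and $\tau$ over $\Sigma^h\uplus\Sigma^{h\prime}$ (from $(s^h,s^{h\prime})\in R_\para$), each transported across vocabularies since the relevant specification formulas do not mention $z$. The VC then delivers the disjunction $\ADrep{z}{\tau^{\ell\to\ell\prime}}\vee\idle^{\ell\to\ell\prime}$ (possibly guarded by $\ADrep{z}{\Gamma^\ell}\wedge\ADrep{z}{\Gamma^{\ell\prime}}$, which hold since $s^\ell,s^{\ell\prime}\models\Gamma$). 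In the first case, \Cref{lem:domain_extension} (applied in the two-vocabulary $\Sigma\uplus\Sigma'$ setting, which requires a straightforward generalization of the lemma to non-single vocabularies — I would state this explicitly) gives $(s^\ell$ as pre, $s^{\ell\prime}$ as post$)\models\tau$, i.e.\ $(s^\ell,s^{\ell\prime})\in R_\para$. In the second case, $\idle$ evaluated on $\domain\setminus\{d_0\}$ forces $\interp^{\ell}$ and $\interp^{\ell\prime}$ to agree on all symbols over that subdomain, hence $s^\ell=s^{\ell\prime}$.

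The step I expect to be the main obstacle is the careful handling of the multi-vocabulary, $z$-excluded bookkeeping in the simulation case: one must (i) generalize \Cref{lem:domain_extension} from a single vocabulary $\Sigma$ to the transition vocabulary $\Sigma\uplus\Sigma'$ (and keep track of which symbols are immutable, so that $\tau$'s preservation clauses survive restriction to the subdomain), (ii) justify that $d_0=d_0'$ so that one variable $z$ really does pin down both low states, and (iii) verify that $s^\ell,s^{\ell\prime}$ indeed satisfy $\Gamma$ so that any $\Gamma$-guards in the verification conditions are discharged — this last point is what makes $s^\ell$ a legitimate element of $S_\para$ in the first place, and it ultimately rests on whether the verification conditions in \Cref{fig:verificationsconditions} are designed to entail $\ADrep{z}{\Gamma^\ell}$ from $\sq(z)\wedge\Gamma^h$. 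Everything else is a routine instantiate-and-translate argument. The remaining requirements, size reduction (\cref{item:strong_size}, already \Cref{lem:size_reduction}) and inductive totality (\cref{item:strong_inductive_totality}), are explicitly outside the scope of this lemma and handled in the next subsection.
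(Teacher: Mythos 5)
Your proposal is correct and follows essentially the same route as the paper: unfold $\simul_\sq$ via \Cref{def: relation_from_sq}, instantiate the relevant verification condition at the two-vocabulary structure with $[z\mapsto d_0]$, and transport the conclusion to the low substructure via \Cref{lem:domain_extension}, with the crux of strong simulation being that $\dom(s^\ell)=\dom(s^{\ell\prime})$ and $\dom(s^h)=\dom(s^{h\prime})$ together force the deleted element to be the same $d_0$ for both pairs. The one point you flag as uncertain — whether $s^\ell\models\Gamma$ needs separate justification — is already given for free by the lemma's hypothesis, since $(s^h,s^\ell)\in\simul_\sq\cap(S_\para\times S_\para)$ entails $s^\ell\in S_\para$ and hence $s^\ell\models\Gamma$; likewise, \Cref{lem:domain_extension} already applies to arbitrary vocabularies (including $\Sigma\uplus\Sigma'$), so no generalization is required.
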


\subsection{Guaranteeing and Validating Totality}\label{subsec:guaranteeing_totality}

\begin{figure}[t]
\newcounter{vccounter}
\setcounter{vccounter}{0}
    \centering
\begin{tabular}{lll}  
\refstepcounter{vccounter}(\thevccounter{}\label{item:vc-iota}) \hspace{0.5cm} & 
$\Gamma^h \wedge \iota^h \wedge \sq(z)  \implies  \ADrep{z}{\iota}^\ell$ \hspace{3cm}  & 
      [$\iota$-preservation] \\
\refstepcounter{vccounter}(\thevccounter{}\label{item:vc-tau}) & 
$\Gamma^h \wedge \Gamma^{h\prime} \wedge \sq(z)  \wedge \sq'(z) \wedge \tau^h \implies \ADrep{z}{\tau\vee \idle}^\ell $  &
[$\tau$-preservation] \\
\refstepcounter{vccounter}(\thevccounter{}\label{item:vc-safety}) & 
$ \Gamma^h \wedge \neg \safetyspec^h \wedge \sq(z) \implies \ADrep{z}{\neg \safetyspec}^\ell$ & [$\neg\safetyspec$-preservation] \\

\refstepcounter{vccounter}(\thevccounter{}\label{item:vc-project}) & 
$\Gamma^h \wedge \sq(z) \implies \closure z ^\ell $ & [projectability] \\

\refstepcounter{vccounter}(\thevccounter{}\label{item:vc-gamma}) & 
$\Gamma^h \wedge \sq(z) \implies \ADrep {z }{\Gamma}^\ell $   &
[$\Gamma$-preservation] \\

\refstepcounter{vccounter}(\thevccounter{}\label{item:vc-muinit}) & 
$\Gamma^h\wedge \iota^h \wedge \size \implies \exists z. \cond(z) $ & [$\cond$-initation] \\
\refstepcounter{vccounter}(\thevccounter{}\label{item:vc-muconsec}) & 
$\Gamma^h \wedge \Gamma^{h\prime} \wedge \cond  (z) \wedge \tau^h \implies \cond'(z) $  &
[$\cond$-consecution] \\

\end{tabular}
    \caption{Verification conditions establishing strong size-reducing simulation for a high-low update $\sq(z)$ with precondition $\cond(z)$, $\Tspec=(\Sigma,\Gamma,\iota,\tau)$, $\safetyspec$ and $k$.\label{fig:verificationsconditions}
 }
\end{figure}

So far, we have considered an arbitrary formula $\sq(z)$. 
We now define a syntactic restriction on $\sq(z)$, that, along with four verification conditions,  guarantees that the induced size-reducing relation $\simul_\sq \cap (S_\para \times S_\para)$ is inductively total (\cref{item:strong_inductive_totality} of \Cref{def:strong_size_reducing_sim}).
Namely, we consider formulas $\sq(z)$ that consist of a precondition $\cond(z)$ and update formulas for all the symbols in the vocabulary. For such formulas, called high-low updates, we show how to encode the remaining existential quantification over states in the definition of inductive totality by first-order quantification.

\begin{definition}
    \label{def:mu-update}
    A formula $\eta(z)$ over the vocabulary $\Sigma^h \uplus \Sigma^\ell$ is a \emph{high-low update}    if it has the form:
    $$ \sq(z) = \cond(z) \wedge \bigwedge_{r\in \Sigma} \left(\forall \seq x.\ r^\ell(\seq x) \leftrightarrow \formula_r(\seq x,z) \right) \wedge \bigwedge_{f\in \Sigma} \left(\forall \seq x.\ f^\ell(\seq x) = t_f(\seq x,z)\right)$$
    where $\cond(z)$ is a formula over $\Sigma^h$, for every relation symbol $r\in \Sigma$ of arity $m$,  $\formula_r$ is a formula over $\Sigma^h$ with $m + 1$ free variables, and for every function symbol $f\in \Sigma$ with arity $m$, $t_f$ is a term over $\Sigma^h$ with $m+1$ free variables.
    We call $\cond$ the \emph{precondition} for $\sq$, each $\formula_r$ the \emph{update formula} for $r$ and each $t_f$ the \emph{update term} for $f$. 
\end{definition}

Intuitively, %
for a high-low update $\sq(z)$ with precondition $\cond(z)$, 
to any high state and assignment to $z$ that satisfies the precondition $\cond(z)$, $\sq(z)$ defines the low state as a deterministic update from the high state, according to the formulas $\formula_r$ and terms $t_f$.
For example, we can use $\cond(z)$ to specify  that the deleted element must be different from all constants, and we can update all relation and function symbols to their values in the high state (we refer to this as a trivial update).
However, since the low state is obtained by deleting an element, we still need to make sure that, after the deletion, the updates result in a well-defined substructure, and,  furthermore, that the structure is in fact a state (satisfies $\Gamma$).
This is done by additional verification conditions.
First, we formalize the precise guarantee of a high-low update:

\begin{lemma}\label{lem:totality}
    Let $\sq(z)$ be a high-low update with precondition $\theta(z)$ and $(\domain,\interp^h)$  a structure with $d_0\in \domain$ such that $(\domain,\interp^h),[z\mapsto d_0]\models \cond(z)$. Then there exists an interpretation $\interp^\ell$ such that 
    $(\domain,\interp^h\uplus\interp^\ell),[z\mapsto d_0] \models \sq(z)$.\end{lemma}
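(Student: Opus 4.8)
The plan is to construct the interpretation $\interp^\ell$ explicitly from the update formulas $\formula_r$ and update terms $t_f$ appearing in the high-low update $\sq(z)$, and then verify that this interpretation together with $\interp^h$ and the assignment $[z \mapsto d_0]$ satisfies $\sq(z)$. Concretely, for each relation symbol $r \in \Sigma$ of arity $m$, I would define $\interp^\ell(r) = \{ (d_1,\ldots,d_m) \in \domain^m \mid (\domain,\interp^h), [x_1 \mapsto d_1,\ldots,x_m \mapsto d_m, z \mapsto d_0] \models \formula_r(\seq x, z) \}$, and for each function symbol $f \in \Sigma$ of arity $m$, I would define $\interp^\ell(f)$ by $\interp^\ell(f)(d_1,\ldots,d_m) = \terminterp{t_f}{(\domain,\interp^h)}{[x_1\mapsto d_1,\ldots,x_m\mapsto d_m, z\mapsto d_0]}$, i.e.\ the value of the update term $t_f$ evaluated in the high structure with the appropriate assignment. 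Note that since $\formula_r$ and $t_f$ are formulas/terms over $\Sigma^h$ only, these definitions depend only on $\interp^h$ and $d_0$, not on $\interp^\ell$ itself, so there is no circularity. This gives a well-defined interpretation $\interp^\ell$ over the full domain $\domain$ (I am not yet deleting $d_0$ — that deletion happens later when passing to the $\sq$-substructure; here I only need to satisfy $\sq(z)$, whose domain is $\domain$).

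Next I would check that $(\domain, \interp^h \uplus \interp^\ell), [z \mapsto d_0] \models \sq(z)$. Unfolding \Cref{def:mu-update}, $\sq(z)$ is the conjunction of $\cond(z)$, the relation-update conjuncts $\forall \seq x.\ r^\ell(\seq x) \leftrightarrow \formula_r(\seq x, z)$, and the function-update conjuncts $\forall \seq x.\ f^\ell(\seq x) = t_f(\seq x, z)$. The first conjunct $\cond(z)$ is a formula over $\Sigma^h$ and holds by the hypothesis $(\domain, \interp^h), [z \mapsto d_0] \models \cond(z)$, since the truth of a $\Sigma^h$-formula is unaffected by the interpretation of $\Sigma^\ell$-symbols. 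For each relation-update conjunct, fix an arbitrary assignment of $\seq x$ to domain elements $\seq d$; then $r^\ell(\seq d)$ holds in $\interp^\ell$ iff $\seq d \in \interp^\ell(r)$ iff (by our definition of $\interp^\ell(r)$) $(\domain,\interp^h)$ with the corresponding assignment satisfies $\formula_r(\seq x, z)$, which is exactly what the biconditional asserts — and again $\formula_r$ is a $\Sigma^h$-formula so it is evaluated in $\interp^h$ regardless of $\interp^\ell$. The function-update conjuncts are handled identically using the definition of $\interp^\ell(f)$ and the fact that $t_f$ is a $\Sigma^h$-term.

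I expect this proof to be essentially routine — the content of the lemma is really just that a deterministic, $\Sigma^h$-definable update always has a witness, which is almost immediate from how high-low updates are defined. The only mild subtlety worth stating carefully is the scope issue: the lemma only claims satisfaction of $\sq(z)$ over the original domain $\domain$, and deliberately does \emph{not} claim that the resulting structure yields a valid $\sq$-substructure after deleting $d_0$ (that requires the substructure to be closed under the function symbols' restrictions, which is the job of the projectability verification condition~(\ref{item:vc-project}) and the $\Gamma$-preservation condition~(\ref{item:vc-gamma}) discussed in the surrounding text). So the main thing to be careful about is not over-claiming: the statement is only the "totality of $\sq$ as a formula," and the proof should make clear that the witness $\interp^\ell$ is read off directly from the syntactic shape of the high-low update, with no case analysis and no appeal to finiteness of $\domain$.
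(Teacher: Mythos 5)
Your proof is correct and takes essentially the same approach as the paper: both define $\interp^\ell$ pointwise by reading off the semantics of the update formulas $\formula_r$ and update terms $t_f$ in the high structure, then observe that $\sq(z)$ holds by construction (the paper compresses the verification step to "trivial," which you spell out). Your closing remark about scope — that the lemma only asserts satisfaction over $\domain$ and does not itself produce an $\sq$-substructure — is accurate and matches how the lemma is subsequently used in the proof of inductive totality.
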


The lemma states %
that whenever the high-state satisfies $\cond(z)$ for some value of $z$, there exists $\interp^\ell$ such that $\sq(z)$ holds. 
By \Cref{def: relation_from_sq}, to obtain the simulating state of the high-state,
we then take a substructure which excludes the assigned value of $z$. 
To ensure that a substructure can be obtained in this way,  
we must verify that for any function symbol $f$, 
its interpretation remains a total function
when we restrict its domain and range
by discarding the assigned value of $z$. To ensure that a state is obtained, we further need to verify that the resulting substructure satisfies the axioms.  These properties are guaranteed by verification conditions (\ref{item:vc-project}) and (\ref{item:vc-gamma}), where 
$\closure z  := \bigwedge_{f\in \Sigma} \closed z f $ and $\closed{z}{f} := \forall \seq x.\ 
(\wedge_{i=1}^m x_i \neq z)\to f(\seq x) \neq z$.

If the aforementioned verification conditions hold, we know that any high state that satisfies $\cond(z)$ has a corresponding low state  in $\simul_\sq \cap (S_\para\times S_\para)$.
To guarantee  inductively totality it remains to ensure that the precondition $\cond(z)$ holds over appropriate high states.  
Verification condition (\ref{item:vc-muinit}) requires 
that for all initial high states with size larger than $k$ there exists $z$ such that $\cond(z)$ holds (note the existential quantification). The size constraint is encoded by $\size := \exists x_1,\ldots,x_{k+1}. \wedge_{i,j} x_i \neq x_j$.
Verification condition (\ref{item:vc-muconsec}) ensures that totality is maintained over transitions between high states by 
requiring that $\cond(z)$ is maintained,
i.e., if $\cond(z)$ holds, then so does $\cond'(z)$. 
To ensure that the low states corresponding to the high pre-state and the high post-state have the same domain, we use the same variable $z$ in $\cond(z)$ and $\cond'(z)$. This ensures that we can delete the same element that is deleted in the pre-state in the post state, leading to the same domain.

\begin{lemma}\label{lem:vc_inductive_totality}
    Let $\sq(z)$ be a high-low update with precondition $\theta(z)$, if verification conditions (\ref{item:vc-project})--(\ref{item:vc-muconsec}) of 
    \Cref{fig:verificationsconditions} are valid,  
    then $\simul_\sq \cap (S_\para \times S_\para)$ satisfies \cref{item:strong_inductive_totality} of \Cref{def:strong_size_reducing_sim}.
\end{lemma}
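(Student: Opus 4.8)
The plan is to isolate a single ``element-deletion'' claim and derive both bullets of \cref{item:strong_inductive_totality} of \Cref{def:strong_size_reducing_sim} from it. The claim is: \emph{if $s^h=(\domain,\interp^h)\in S_\para$ and $d_0\in\domain$ satisfy $(\domain,\interp^h),[z\mapsto d_0]\models\cond(z)$ and $|\domain|\geq 2$, then there is a state $s^\ell\in S_\para$ with $\dom(s^\ell)=\domain\setminus\{d_0\}$ and $(s^h,s^\ell)\in\simul_\sq$.} To prove the claim I would first apply \Cref{lem:totality} to get an interpretation $\interp^\ell$ over the \emph{full} domain $\domain$ with $(\domain,\interp^h\uplus\interp^\ell),[z\mapsto d_0]\models\sq(z)$. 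Since $s^h\models\Gamma$, the combined structure also satisfies $\Gamma^h$, so the antecedent of verification condition~(\ref{item:vc-project}) holds and we obtain $(\domain,\interp^\ell),[z\mapsto d_0]\models\closure{z}$, i.e.\ every function symbol sends tuples avoiding $d_0$ to elements avoiding $d_0$; hence the restriction $\exc{\interp^\ell}$ of $\interp^\ell$ to $\exc\domain:=\domain\setminus\{d_0\}$ is a genuine substructure of $(\domain,\interp^\ell)$. By \Cref{def: relation_from_sq} this makes $s^\ell:=(\exc\domain,\exc{\interp^\ell})$ an $\sq$-substructure of $s^h$, so $(s^h,s^\ell)\in\simul_\sq$ and $\dom(s^\ell)=\exc\domain$. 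Finiteness of $s^\ell$ is immediate; for $s^\ell\models\Gamma$ I would use verification condition~(\ref{item:vc-gamma}), which (by the same antecedent) yields $\ADrep{z}{\Gamma}^\ell$, i.e.\ $(\domain,\interp^\ell),[z\mapsto d_0]\models\ADrep{z}{\Gamma}$ when $\interp^\ell$ is read as a $\Sigma$-structure, and then \Cref{lem:domain_extension} with $\formula=\Gamma$ (legitimate as $z\notin\Vars(\Gamma)$) turns this into $(\exc\domain,\exc{\interp^\ell})\models\Gamma$. Thus $s^\ell\in S_\para$, proving the claim.

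Given the claim, the first bullet is immediate: any $s^h=(\domain,\interp^h)\in\init_\para$ with $|s^h|>k$ satisfies $\Gamma^h$, $\iota^h$ and $\size$ (the last because $|\domain|\geq k+1$), so verification condition~(\ref{item:vc-muinit}) supplies $d_0\in\domain$ with $(\domain,\interp^h),[z\mapsto d_0]\models\cond(z)$; taking $k\geq 1$ (so $|\domain|\geq 2$, which is in any case necessary for an $\sq$-substructure to exist) the claim gives the required $s^\ell$, and since $s^h,s^\ell\in S_\para$ we get $(s^h,s^\ell)\in\simul_\sq\cap(S_\para\times S_\para)$. For the second bullet, let $(s^h,s^{h\prime})\in R_\para$ with $s^h=(\domain,\interp^h)$ and $s^{h\prime}=(\domain,\interp^{h\prime})$ over a common finite domain, and suppose $(s^h,s^\ell)\in\simul_\sq\cap(S_\para\times S_\para)$ for some $s^\ell$. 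Unfolding \Cref{def: relation_from_sq} yields $d_0\in\domain$ and $\interp^\ell$ with $(\domain,\interp^h\uplus\interp^\ell),[z\mapsto d_0]\models\sq(z)$ and $\dom(s^\ell)=\domain\setminus\{d_0\}$; since $\cond$ is a conjunct of $\sq$, in particular $(\domain,\interp^h),[z\mapsto d_0]\models\cond(z)$. Because $s^h,s^{h\prime}\models\Gamma$ and $(\domain,\interp^h\uplus\interp^{h\prime})\models\tau^h$, the antecedent of verification condition~(\ref{item:vc-muconsec}) is met at $[z\mapsto d_0]$, so $\cond'(z)$ holds, i.e.\ $(\domain,\interp^{h\prime}),[z\mapsto d_0]\models\cond(z)$ with $\interp^{h\prime}$ read over $\Sigma$ --- crucially for the \emph{same} $d_0$. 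As $\dom(s^\ell)=\domain\setminus\{d_0\}$ is nonempty, $|\domain|\geq 2$, so the claim applied to $s^{h\prime}$ and $d_0$ produces $s^{\ell\prime}\in S_\para$ with $\dom(s^{\ell\prime})=\domain\setminus\{d_0\}=\dom(s^\ell)$ and $(s^{h\prime},s^{\ell\prime})\in\simul_\sq\cap(S_\para\times S_\para)$, exactly as inductive totality requires.

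The main obstacle is not a single deep step but the consistent bookkeeping across the tagged and primed copies of $\Sigma$: one must repeatedly invoke the identification of an interpretation of $\Sigma$ with one of its tagged copies so that ``$s^h\models\Gamma$'', ``$s^{h\prime}\models\Gamma$'', ``$(s^h,s^{h\prime})\in R_\para$'' and ``$(\domain,\interp^h),[z\mapsto d_0]\models\cond(z)$'' line up \emph{verbatim} with the antecedents and consequents of verification conditions~(\ref{item:vc-project})--(\ref{item:vc-muconsec}). The one genuinely substantive point is the mismatch between \Cref{lem:totality}, which returns a witness interpretation on the whole domain $\domain$, and \Cref{def: relation_from_sq}, which needs a substructure on $\domain\setminus\{d_0\}$: bridging them is exactly the purpose of verification condition~(\ref{item:vc-project}) (closedness of functions under deletion of $d_0$), and certifying that the bridged structure is still a \emph{state} is exactly the purpose of verification condition~(\ref{item:vc-gamma}) together with \Cref{lem:domain_extension}. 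A minor edge case --- a domain of size one, where no $\sq$-substructure can exist --- is dispatched by noting that $|s^h|>k$ forces $|s^h|\geq 2$ once $k\geq 1$, while in the consecution case $|\domain|\geq 2$ is inherited from the already-given $s^\ell$.
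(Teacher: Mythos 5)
Your proof is correct and follows essentially the same route as the paper's: for both bullets of inductive totality, it extracts $\cond(z)$ at the right assignment (from verification condition~(\ref{item:vc-muinit}) plus $\size$ in the base case, and from verification condition~(\ref{item:vc-muconsec}) plus the existing low state in the inductive case), applies \Cref{lem:totality} to obtain a full-domain low interpretation, invokes verification condition~(\ref{item:vc-project}) to justify restricting to $\domain\setminus\{d_0\}$, and then verification condition~(\ref{item:vc-gamma}) together with \Cref{lem:domain_extension} to certify $s^\ell\models\Gamma$, hence $s^\ell\in S_\para$. The only organizational difference is that you factor the common second half into a reusable ``element-deletion claim,'' whereas the paper inlines it (writing ``similarly to the previous item'' in the transition case); this is a clean refactoring, not a different argument. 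You also flag the $|\domain|\geq 2$ edge case, which the paper's proof leaves implicit: this is a legitimate observation, since the paper nowhere states $k\geq 1$, and your handling (for the base case assume $k\geq 1$, for the inductive case deduce $|\domain|\geq 2$ from the nonemptiness of $\dom(s^\ell)$) is sound and slightly more careful than the original.
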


\begin{remark}\label{remark:invariants}
The main role of $\cond(z)$ is to specify the deleted element in a high state in a simulation pair. However, it can also be used to restrict the high states, which may be necessary to make verification conditions (\ref{item:vc-tau})--(\ref{item:vc-gamma}) valid. This can be done by conjoining to the deletion condition in $\cond(z)$ a closed formula over $\Sigma^h$ that serves as an invariant. In this case, the inductive totality checks verify that the invariant is inductive, ensuring soundness of the restriction.
\end{remark}

\subsection{Putting It All Together}

The following theorem summarizes the soundness of the first-order encoding of a strong size-reducing simulation by a high-low update.

\begin{theorem}\label{theo:summary}
If all the verification conditions for a high-low update $\sq(z)$, $\Tspec$, $\safetyspec$ and $k$ 
(\Cref{fig:verificationsconditions}) are valid, then 
$\simul_\sq\cap (S_\para\times S_\para)$ is a strong size-reducing simulation for $\Tpar$, $P$ and $k$. Consequently, $k$ is a cutoff for $\Tpar$ and $P$.
\end{theorem}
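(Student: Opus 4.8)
The plan is to prove \Cref{theo:summary} by simply composing the lemmas that have already been established, so the argument is essentially a bookkeeping exercise that checks that the five requirements of \Cref{def:strong_size_reducing_sim} are each discharged by some subset of the verification conditions of \Cref{fig:verificationsconditions}, and then invoking \Cref{lem:strong_is_weak} and \Cref{theorem:soundness}. Concretely, let $H = \simul_\sq \cap (S_\para \times S_\para)$ be the size-reducing relation of the high-low update $\sq(z)$, restricted to pairs of finite states. I would first observe that, since $\sq(z)$ is a high-low update, it is in particular a formula over $\Sigma^h \uplus \Sigma^\ell$, so all the lemmas stated for arbitrary $\sq(z)$ apply to it.

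The key steps, in order: (i) \emph{Size Reduction} (\cref{item:strong_size}) follows directly from \Cref{lem:size_reduction}: every pair $(s^h,s^\ell) \in H$ has $s^h$ finite (since $H \subseteq S_\para \times S_\para$), hence $|s^\ell| < |s^h|$. (ii) \emph{Strong Initiation}, \emph{Strong Simulation}, and \emph{Fault Preservation} (\cref{item:strong_initation,item:strong_simulation,item:strong_bad}) follow from \Cref{lem:vc_simulation}, whose hypothesis is exactly the validity of verification conditions (\ref{item:vc-iota})--(\ref{item:vc-safety}), all of which we are assuming. (iii) \emph{Inductive Totality} (\cref{item:strong_inductive_totality}) follows from \Cref{lem:vc_inductive_totality}, whose hypothesis is the validity of verification conditions (\ref{item:vc-project})--(\ref{item:vc-muconsec}), again among our assumptions, and which additionally requires that $\sq(z)$ be a high-low update — which it is by hypothesis. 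Having verified all five clauses of \Cref{def:strong_size_reducing_sim}, we conclude that $H$ is a strong size-reducing simulation for $\Tpar$, $P$ and $k$.

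For the final sentence of the theorem, I would apply \Cref{lem:strong_is_weak} to deduce that $H$ is also a (plain) size-reducing simulation for $\Tpar$, $P$ and $k$, and then invoke \Cref{theorem:soundness} to conclude that $k$ is a cutoff for $\Tpar$ and $P$. This completes the proof.

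I do not expect any real obstacle here, since the theorem is explicitly designed as the assembly point for the preceding development; the only thing to be careful about is making sure the partition of the seven verification conditions into the two lemma hypotheses is stated correctly — conditions (\ref{item:vc-iota})--(\ref{item:vc-safety}) feed \Cref{lem:vc_simulation} and conditions (\ref{item:vc-project})--(\ref{item:vc-muconsec}) feed \Cref{lem:vc_inductive_totality}, with no overlap and together covering all seven — and that the restriction to $S_\para \times S_\para$ is carried uniformly so that the finiteness hypothesis needed in \Cref{lem:size_reduction} is always available. If anything, the mild subtlety worth a sentence is why it suffices to check the requirements of \Cref{def:strong_size_reducing_sim} for the \emph{restricted} relation $H$ rather than for $\simul_\sq$ itself, but this is immediate since \Cref{def:strong_size_reducing_sim} is a statement about relations $H \subseteq S_\para \times S_\para$ and all the lemmas are already phrased in terms of $\simul_\sq \cap (S_\para \times S_\para)$.
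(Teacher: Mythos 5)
Your proposal is correct and follows essentially the same approach as the paper, which states the theorem as a corollary of \Cref{lem:size_reduction}, \Cref{lem:vc_simulation}, \Cref{lem:vc_inductive_totality} and \Cref{theorem:soundness}. If anything, your version is slightly more complete than the paper's one-line proof, since you also explicitly invoke \Cref{lem:strong_is_weak} to pass from \emph{strong} size-reducing simulation to size-reducing simulation before applying \Cref{theorem:soundness}, a step the paper leaves implicit.
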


\paragraph{Back to \Cref{ex1}.} %
\Cref{fig:sq_for_tree} encodes a high-low update that induces a size-reducing simulation for establishing a cutoff of $2$ for the TreeTermination protocol (\Cref{ex1}). 
In this example, violation of safety is witnessed by a node that did not terminate even though the root terminated. To allow the high-low update to refer to the node inspected by the property, we first Skolemize the negated safety property into 
$\mathsf{termd(root)\wedge \neg termd(n_{sk})}$, where $\mathsf{n_{sk}}$ is a fresh immutable constant that is added to the first-order specification of the transition system.
The original transition system violates the universally quantified safety property if and only if the transition system augmented with $\mathsf{n_{sk}}$ violates the Skolemized property.
(A similar Skolemization is performed in all of our examples.)
The precondition of the high-low update is then: $\cond(z)  = (z \neq \mathsf{n_{sk}}^h \wedge z \neq \mathsf{root}^h)$, i.e., the deleted element $z$ is required to be different than both the root and the violating node $\mathsf{n_{sk}}$, which ensures fault preservation.
The update of the $\mathsf{ack}$ relation is specified by the formula  $\formula_{\mathsf{ack}}$, which says that messages bypass the deleted node.
The other updates are trivial and thus not specified in \Cref{fig:sq_for_tree}. $\formula_{\mathsf{ack}}$ and the resulting high-low update are given below. 
\begin{align*}
    \formula_{\mathsf{ack}}(x,y,z) \ & =  ( \mathsf{ack}^h(x,y) \wedge x\neq z \wedge y \neq z) \\
   & \vee    (\mathsf{ack}^h(x,z) \wedge \mathsf{child}^h(z,y))
    \vee  (\mathsf{child}^h(x,z) \wedge \mathsf{ack}^h(z,y))\\
    \sq(z) = \cond(z) &
        \wedge (\forall x,y. \ \mathsf{leq}^\ell(x,y)\leftrightarrow\mathsf{leq}^h(x,y)) 
        \wedge (\forall x. \ \mathsf{termd}^\ell(x)\leftrightarrow\mathsf{termd}^h(x)) \\
        & \wedge  (\forall x,y. \ \mathsf{ack}^\ell(x,y)\leftrightarrow\formula_{\mathsf{ack}}(x,y,z))
        \wedge  (\mathsf{root}^\ell = \mathsf{root}^h) \wedge (\mathsf{n_{sk}}^\ell = \mathsf{n_{sk}}^h)
\end{align*}

The structure in the left of \Cref{fig:substructure} is an example of a model for $\sq(z)$ (the Skolem constant, which is omitted from \Cref{fig:substructure}, can be interpreted to any node other than node $3$).
With this definition of a high-low update, all the verification conditions are valid.
To see that $\tau$-preservation holds, consider a transition $\mathsf{terminate}(n)$ in the high state. 
If $n$ is equal to the deleted node $z$, all changes made by the transition are lost in the projection, so the transition is simulated by a stutter in the low state. 
If $n \neq z$, the transition is simulated by  $\mathsf{terminate}(n)$ in the low states: 
if $z$ is a child of $n$, then the updates to $\mathsf{ack}$ guarantee that the precondition for the transition still holds in the low state; 
if $z$ is the parent of $n$, 
the updates to $\mathsf{ack}$ guarantee that the updated post-state is indeed the post-state of the transition; finally, if $z$ is not a child nor a parent of $n$, the simulation is clear. 
We emphasize that when using our approach, validating the verification conditions for the given high-low update is delegated to an SMT solver (see  \Cref{subsec:implementation} for explanation of the hint in \Cref{fig:sq_for_tree}).

\begin{figure}[t]
    \centering
\lstset{style=mystyle}
\begin{lstlisting}[language = ivy]
bound node 2      # default value 
condition(z: node) =  z $\neq$ $\mathsf{n_{sk}}$ $\wedge$  z $\neq$ root            # default formula 
update ack(x: node, y: node, z: node) = 
    (ack(x,y) $\wedge$ x $\neq$ z $\wedge$ y $\neq$ z) $\vee$ (ack(x,z) $\wedge$ child(z,y)) $\vee$ (child(x,z) $\wedge$ ack(z,y))
hint terminate($\mathsf{n_h}$: node, $\mathsf{n_\ell}$: node, z: node) =  ($\mathsf{n_\ell}$ = $\mathsf{n_h}$)
  \end{lstlisting}

    \caption{Encoding of a high-low update for \Cref{ex1}.}
    \label{fig:sq_for_tree}
\end{figure}

\subsection{Extension to a Many-Sorted Vocabulary} \label{subsec:many_sorted}
We considered a single sort in the vocabulary used to specify $\Tspec$, which the finite-domain semantics treats as finite but unbounded. 
Our approach generalizes to transition systems specified using many-sorted first-order logic, including interpreted sorts and symbols. 
In this case, the finite-domain semantics affects only the sorts that are designated as finite.  The finite sorts and corresponding symbols are required to be uninterpreted, which means that they may only be restricted by a finitely-axiomatizable background theory (e.g., the theory of rooted trees as in \Cref{ex1}). 
Other sorts and symbols may be interpreted, i.e., defined by arbitrary background theories, including ones that are not  finitely-axiomatizable (e.g., linear integer arithmetic).

To handle multiple finite sorts, we apply our approach iteratively, each time establishing a cutoff for a single finite sort $\sort{s}$
by finding a size-reducing simulation for that sort.
When proving a cutoff for $\sort{s}$, we define the cardinality of a structure as the cardinality of $\domain(\sort s)$, and replace $\domain$ with $\domain(\sort s)$ in the definition of a cutoff and size-reducing simulation.
The variable $z$ used as a free variable in a high-low update is of sort $\sort{s}$, and 
in the definition of $\simul_\sq$ we define %
$\exc \domain (\sort s) = \domain (\sort s)\setminus \{d_0\}$. 
Finally, in the definition of a $z$-excluded representation and $\closed z f$ we add the guard $x\neq z$ only for variables $x$ of the sort $\sort s$.
Then, when moving on to establish a cutoff for the next sort
we restrict the cardinality of $\sort s$ to the established cutoff bound using the axiom $\mathrm{size}_{\leq k}$. 
In some cases, the order in which we apply our method matters. 
Specifically, when we have a function $f$ from finite sort $\sort {s_1}$ to finite sort $\sort {s_2}$, it is natural to first show a cutoff for $\sort{s_1}$.

\section{Implementation and Evaluation}\label{sec:evaluation}
\subsection{Implementation}\label{subsec:implementation}

The implementation of our method, which is available at \cite{artifact}, is built on top of the open-source verification tool {\mypyvy}~\cite{mypyvy}. 
{\mypyvy} allows a user to provide a first-order specification of a transition system $\Tspec$ and a safety property $\safetyspec$.
To run our tool, the user additionally provides a cutoff bound for one of the system's sorts, as well as a high-low update specified as a deletion condition for $\cond(z)$ 
along with update formulas and terms for relation and function symbols (\Cref{def:mu-update}). When omitted, 
the deletion condition is assigned the default formula $\bigwedge_{c} z\neq c$ where $c$ ranges over the immutable constants, including Skolem constants,
and the cutoff bound is assigned  the number of immutable constants.
Similarly, for an update term the default is $\formula_r(\seq{x_h},z)=r^h(\seq{x_h})$ and for an update formula the default is $t_f(\seq{x_h},z)=f^h(\seq{x_h})$.
Invariants of the system can also be provided, and are conjoined to $\cond(z)$ as described in \Cref{remark:invariants}.
For instance, the user input for \Cref{ex1} is given in \Cref{fig:sq_for_tree}. 
Our tool generates the verification conditions described in \Cref{fig:verificationsconditions} and validates them using the SMT solver Z3~\cite{z3}. 
If Z3 establishes that some verification condition is invalid, it generates a counterexample, which is a model of the negated verification condition. In such cases, the counterexample can point the user towards finding an appropriate fix for the high-low update.

\paragraph{Optimizations.}
Of the verification conditions in \Cref{fig:verificationsconditions}, item (\ref{item:vc-tau}) is the most challenging for the solver. This is because $\tau$ often has an $\exists^*$ prefix (quantifying over transition arguments) followed by nested quantifiers, as can be seen in \Cref{ex1}. Since $\tau$ appears on both sides of the implication in  verification condition (\ref{item:vc-tau}), quantifier alternations are introduced, which are difficult for SMT solvers to handle. 
To expedite this check, we split it into a separate check for each transition (disjunct) of $\tau$, strengthening the simulation 
requirement. Furthermore, we let the user provide hints for the  existential quantifiers (arguments) on the right-hand side of the implication, as described next.
Generally speaking,  the queries of verification condition (\ref{item:vc-tau}) have the form
$(\exists \seq{x_h}.\formula(\seq{x_h},z)) \implies (\exists \seq{x_\ell}. \formulaa(\seq{x_\ell},z))$,
where $\formula$ and $\formulaa$ are potentially complicated formulas with nested quantifiers.
We propose verifying the above using a user-provided \emph{hint} $\hint(\seq{x_h},\seq{x_\ell},z)$ that satisifies

\begin{compactitem}
    \item \textit{totality:} $\formula(\seq{x_h},z)\implies \exists \seq{x_\ell}. \hint(\seq{x_h},\seq{x_\ell},z)$,
    \item \textit{sufficiency:} $\formula(\seq{x_h},z) \wedge \hint(\seq{x_h},\seq{x_\ell},z)\implies \formulaa(\seq{x_\ell},z)$.
\end{compactitem}
In this way we break the query into two, eliminating the existential quantification over $\formulaa(\seq{x_\ell},z)$, and introducing one over $\hint(\seq{x_h},\seq{x_\ell},z)$.
The idea is for 
$\hint(\seq{x_h},\seq{x_\ell},z)$ to be  much simpler than $\formulaa(\seq{x_h},z)$, which reduces the number of quantifier alternations and significantly improves performance. 
In some cases we use hints of the form $ \hint(\seq{x_h},\seq{x_\ell},z) = ( \seq{x_\ell} 
 = \seq{t}(\seq{x_h},z) ) $, where $\seq{t}(\seq{x_h},z)$ are terms. 
 In such cases, totality is trivial and we have completely eliminated the existential quantification. 
In the TreeTermination example  we give a hint of this form for the transition \textsf{terminate}, see \Cref{fig:sq_for_tree}. 
The hint says that the transition $\mathsf{terminate}$ in high states with argument $\mathsf{n_h}$ 
is simulated in low states by %
the same argument, $\mathsf{n_\ell=n_h}$ (or by a stutter).

\subsection{Results}

We evaluate our approach on examples from various previous works, listed in \Cref{tab1}.
We use {\mypyvy} with Z3 version 4.12.2.0, run on a laptop running Windows, with a Core-i7 2.8 GHz CPU.
All run-times are under 10 seconds.

\Cref{tab1} summarizes our results. Each row describes one example. As described in \Cref{subsec:many_sorted}, for transition systems with multiple finite sorts we apply our approach iteratively, in which case each iteration is described using a separate row in the order of application. %
For each example, the columns list its sorts and their semantics, the sort for which we show a cutoff together with the %
cutoff bound, how many relation and function symbols are given update terms and formulas (instead of using default values), 
whether %
an invariant is given, and how many transitions are given hints.
For all examples the default formula for the deletion condition is used.
In the sorts column, ``$\sort s$ : fin'' indicates that $\sort s$ is an uninterpreted sort with the finite-domain semantics, ``$\sort s$ : $k$''  indicates that  $\sort s$ is an uninterpreted sort with the bounded semantics with bound $k$ (this is encoded by the axiom $\mathrm{size}_{\leq k}$), and
``\textsf{int}'' stands for the interpreted integer sort.

    \begin{table}[t]
\caption{Summary of Results.}\label{tab1}
\begin{tabular}{|l|l|l|c|c|c|}
\hline
Example &  Sorts \& Semantics  & Cutoff & Updates & Inv. & Hints \\
\hline
\hline
TreeTermination &  \textsf{node}: fin & \textsf{node}: 2 & 1/3 & $\checkmark$  & 1/1 \\
\hline

EchoMachine & \textsf{round}: fin, \textsf{value}: fin & \textsf{round}: 3 & 0/3 & $\checkmark$  & 0/2  \\
  & \textsf{round}: $3$, \textsf{value}: fin & \textsf{value}: 2 & 0/3 & $\times$  & 0/2 \\
\hline
ToyConsensus 
  & \textsf{value}: fin, \textsf{quorum}: fin, \textsf{node}: fin & \textsf{value}: 2 & 1/3 & $\checkmark$ & 0/2 \\
  & \textsf{value}: 2, \textsf{quorum}: fin, \textsf{node}: fin & \textsf{quorum}: 2 & 0/3 & $\times$ & 0/2 \\
  & \textsf{value}: 2, \textsf{quorum}: 2, \textsf{node}: fin & \textsf{node}: 4 & 0/3 & $\times$ & 1/2\\
\hline
LockServer &  \textsf{node}: fin & \textsf{node}: 2 & 1/5 & $\checkmark$  & 0/5 \\
\hline

ListToken & \textsf{node}: fin & \textsf{node}: 3 & 2/7 & $\checkmark$ & 1/3 \\
\hline

PlusMinus & \textsf{thread}: fin, \textsf{int} & \textsf{thread}: 1 & 1/3 & $\checkmark$ & 0/2 \\
\hline

EqualSum & \textsf{thread}: fin, \textsf{index}: fin, \textsf{int} & \textsf{thread}: 2 & 0/8 &  $\times$ & 0/1  \\
  & \textsf{thread}: $2$, \textsf{index}: fin, \textsf{int} & \textsf{index}: 2 & 3/8 & $\checkmark$ & 0/1 \\
\hline

EqualSumOrders & \textsf{thread}: fin, \textsf{index}: fin, \textsf{int} & \textsf{thread}: 2 & 0/8 &  $\times$ & 0/1  \\
  & \textsf{thread}: $2$, \textsf{index}: fin, \textsf{int} & \textsf{index}: 4 & 3/8 & $\checkmark$ & 0/1 \\
\hline

\end{tabular}
\end{table}

\medskip Next, we briefly describe each of the examples.

\textit{TreeTermination.} Our motivating example from \Cref{ex1}, originally from~\cite{distributed_network_protocols}. Crucially, this example cannot be proved by means of an inductive invariant as it is only safe under the finite-domain semantics.

\textit{EchoMachine.} Taken from \cite{infinite_models}. 
A simple distributed protocol where values are echoed in rounds which are linearly ordered. A value may be echoed in some round if there exists some previous round that echoed it. 
The safety property is that the value echoed in all rounds is unique. 
\cite{infinite_models} shows a simple inductive invariant that implies safety over finite structures but not in the first-order semantics; we use the same invariant in a high-low update to establish safety.

\textit{ToyConsensus.} Taken from~\cite{modularity_deductive}.
A simple consensus protocol with three finite sorts, including sort $\sort{quorum}$, which is axiomatized to ensure a nonempty intersection for any two quorums.
We highlight this example as an example where the order of application on sorts matters. Specifically, after we show a cutoff of $2$ for the sort \textsf{quorum}, we can  instantiate the quorum intersection axiom and  Skolemize the node in the intersection of any pair of quorums, giving $4$ Skolem constants. We can then freely delete any other node.

\textit{LockServer.} Taken from \cite{ivy}. A simple %
mutual exclusion protocol. In this example, our approach does not simplify the proof since to prove simulation the high-low update must include an invariant that suffices to prove  safety directly.  

\textit{ListToken.} A custom example where a token travels along a linked list, modeled with a $\mathsf{next}$ relation.
An invariant that proves safety would require talking about the transitive closure of $\mathsf{next}$, which cannot be done in FOL. Instead, our approach based on simulation avoids the need to reason about the transitive closure.
The high-low update changes the $\mathsf{next}$ relation to skip over the deleted node, and moves the token accordingly.

\textit{PlusMinus.} Taken from \cite{thread_modularity}.
A simple concurrent program where any number of threads each move between two program locations, incrementing and then decrementing a shared counter, which must always remain non-negative. 
In~\cite{thread_modularity}, it is shown that no Ashcroft invariant proves its safety. 
Intuitively, an inductive invariant would need to count the number of threads in some program location, which is problematic in FOL. Instead, the high-low update undoes increments to the counter by the deleted thread.

\textit{EqualSum.} Taken from \cite{commutativity_simplifies}.
A simple concurrent program where threads sum over elements of an array.
The safety property is that any two threads that terminate hold the same sum.
An inductive invariant would need to record sums of array segments of unbounded length, which is not clear how to express in FOL. 
Our approach easily establishes a cutoff of 2 for sort $\sort{thread}$. Then, to establish a cutoff for sort $\sort{index}$, the high-low update deletes an array element and subtracts it from the sum of any thread pointing to a higher index.

\textit{EqualSumOrders.} A variant of EqualSum where every thread holds a different order on the indices of the array. Unlike EqualSum, in this variant the order of application of our method on the sorts matters. To prove the cutoff for sort $\mathsf{index}$ we use a similar Skolemization to the one used in ToyConsensus.

\paragraph{Limitations.}
While our approach is applicable to some examples that cannot be verified by previous approaches, it has 
limitations.
The definition of a high-low update does not specify how the state should be updated 
to obtain a simulation relation. 
In our examples, we used the high-low update to undo all changes that involved the deleted element $z$ in all %
transitions so far. In this way, transitions that are completely undone are simulated by stutters.
For some examples, it is hard to keep track of the changes that need to be undone, and it requires some instrumentation.
For example, in the ToyConsensus example, when a value is chosen by some quorum we change the model to keep track of the quorum that chose it, and not only the fact that it was chosen. 
In some cases, undoing the changes that involved $z$ causes transitions of other elements to be impossible to simulate, for example, if the transition has a guard on some global variable, affected by $z$.

\section{Related Work}\label{sec:related}

Parameterized verification of safety properties is a widely-researched problem. Many works~\cite{invisible_invariants,smt_parameterized,swiss,thread_modularity,pdr_alterations,thread_modular_counter_abstraction,duoAI} have developed techniques to infer inductive invariants for parameterized systems in first-order logic. Such techniques are 
bound to fail if no satisfactory invariant exists in the language considered, or if the system is only safe in the finite instances. Our approach is deductive and based on the notion of cutoffs~\cite{view_abstraction,tight_cutoffs,many_to_few,reasoning_about_rings,quicksilver,better_cutoffs,dynamic_cutoff_detection,cutoff_concensus,namjoshi2007,cutoffs_point_to_point,agreement_based_systems}.  
Most cutoff results
guarantee that a cutoff always exists for a class of systems and properties. In turn, the system specification is much more limited. Our work proposes a framework for establishing cutoffs for a specific first-order transition system and property.
Next we focus on the works closest to ours.

\paragraph{Cutoffs by Simulation Relations.}  
In~\cite{bhat_nagar}, the cutoff method is utilized for safety verification of distributed protocols specified in FOL. They automatically generate and check a simulation relation between an arbitrary system instance and a cutoff instance.%
We, on the other hand, encode a simulation relation between system instances that differ in size by $1$, and not directly to the cutoff instance. In this way, our work captures an inductive argument over the state size, which lets it %
capture the finite-domain semantics. In contrast,~\cite{bhat_nagar} cannot verify examples such as \Cref{ex1} that are only correct in finite instances. Additionally, the simulation relation in~\cite{bhat_nagar}
is defined by mapping all elements in any instance to the elements of the cutoff instance such that 
transitions are preserved, while 
we use a partial mapping of elements
with updates that allow to undo transitions. %
Relying on a fixed total mapping of elements is bound to fail in examples 
such as PlusMinus, where threads move between program locations.
In~\cite{clarke_grumberg}, a cutoff result is given for a parameterized mutual-exclusion protocol by giving a simulation relation between system instances that differ in size by $1$. Their proof of simulation is manual, and uses an unbounded number of steps, which cannot be automated by an SMT solver.
\cite{squeezers} formulates the notion of a simulation-inducing squeezer for safety properties, which is essentially a \emph{deterministic} size-reducing simulation with a slightly weaker notion of stuttering. 
This work is evaluated only on deterministic sequential array programs and cannot capture transition systems defined in FOL.
Other works~\cite{difference_invariants,full_program_induction} automatically generate proofs for program assertions by induction on some size parameter.
These methods are currently limited to programs manipulating arrays.

\paragraph{Mitigating Limitations of FOL.}
\cite{VeriCon,heap_paths,padon_phd,paxos_made_epr,modularity_deductive,quantifier_instantitation} use the Effectively Propositional (EPR)~\cite{epr} decidable logic of FOL to model and verify transition systems. 
EPR has the finite-model property: any satisfiable formula has a finite model. In the case of safety verification, this has the effect of capturing the finite-domain semantics. However, it is not always clear how to ensure that the verification conditions are in EPR. 
When the verification conditions %
are outside of EPR, as in \Cref{ex1}, infinite models may arise. \cite{infinite_models} finds such models and shows that they can sometimes be eliminated by induction axioms. \cite{inductive_lemmas} develops an approach for automatically generating induction axioms for systems modeled using algebraic data-types. 
\cite{local_reasoning} takes an alternative approach to modeling systems in FOL, separating the specification of topology from the specification of the protocol.

\paragraph{Mitigating Limitations of Inductive Invariants.}
In~\cite{commutativity_simplifies}, 
programs whose inductive invariants are not expressible in FOL (or are too complex) are verified by finding 
an inductive invariant for a reduction of the program obtained by reordering execution steps, relying on commutativity.
Counting Proofs~\cite{proofs_that_count} create proofs that keep track of the number of threads that satisfy some properties, akin to examples like PlusMinus.

\section{Conclusion}\label{sec:conclusion}

We have presented a framework for specifying and validating size-reducing simulations in first-order logic as a way to establish cutoffs for the finite-domain semantics of first-order transition systems.
Size-reducing simulations mimic induction over the domain size, which is sometimes simpler than induction over time, and, notably, leverages the finite-domain semantics.

As our evaluation shows, our framework manages to verify some examples that cannot be verified by previous approaches. 
We leave automating the search for a high-low update to future work. 
In this context we note that without hints our verification conditions are not in EPR, and therefore SMT solvers sometimes struggle to find models when a query is not valid, which might be an obstacle for automation.

It is possible to extend our approach to a larger class of size-reducing simulations,
e.g., by allowing to delete more than one domain element, or relaxing the deterministic updates to other syntactic restrictions that still guarantee \Cref{lem:totality}. We did not find such extensions necessary in the examples we considered.
Another interesting direction for future research is to consider a wider range of properties, including liveness properties. While the definition of a size-reducing simulation can be extended naturally, the more intricate structure of traces that satisfy liveness properties seems to be a challenge.

\subsubsection{Acknowledgement}
We thank Neta Elad, Shachar Itzhaky and Orna Grumberg for helpful advice.
The research leading to these results has received funding from the
European Research Council under the European Union's Horizon 2020 research and innovation programme (grant agreement No [759102-SVIS]).
This research was partially supported by the Israeli Science Foundation (ISF) grant No.\ 2117/23.

\bibliography{references}

\newpage

\appendix

\section{Proofs}\label{appendix:proofs}

    \setcounter{lemma}{0}
    \setcounter{theorem}{0}

\begin{theorem}\label{theorem:soundness_app}
    If $\simul$ is a size-reducing simulation for $\Tpar$, $P$  and $k$, then $k$ is a cutoff for $\Tpar$ and $P$.
\end{theorem}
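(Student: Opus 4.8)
The plan is to prove the contrapositive of the nontrivial ($\Longleftarrow$) direction of \Cref{def:cutoff}: assuming $\Tpar \not\models P$, I would produce a finite domain $\domain$ with $|\domain| \le k$ such that $T_\domain \not\models P$. The ($\Longrightarrow$) direction is immediate, since every trace of any $T_\domain$ is a trace of $\Tpar$, so $\Tpar \models P$ entails $T_\domain \models P$ for all (in particular all small) $\domain$. For the other direction, suppose $\Tpar \not\models P$. Because $(s,s')\in R_\para$ forces $\dom(s)=\dom(s')$, any violating trace of $\Tpar$ is confined to a single instance, i.e.\ there is a finite $\domain_0$ and a trace $(s_i)_{i=0}^t$ of $T_{\domain_0}$ with $s_0\in\init_\para$, $(s_i,s_{i+1})\in R_\para$, and $s_t\notin P$.

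The core of the argument is a one-step descent lemma: \emph{if $T_\domain\not\models P$ and $|\domain|>k$, then $T_{\domain'}\not\models P$ for some $\domain'$ with $|\domain'|<|\domain|$.} To prove it, fix a violating trace $(s_i)_{i=0}^t$ of $T_\domain$. Since $s_0\in\init_\para$ and $|s_0|>k$, Initiation (\Cref{def:size_reducing_sim}, item~\ref{item:initation}) gives $s_0^\ell\in\init_\para$ with $(s_0,s_0^\ell)\in\simul$, and Size Reduction gives $|s_0^\ell|<|s_0|$; note $s_0^\ell$ is finite. Then I would build $s_1^\ell,\dots,s_t^\ell$ inductively: given $(s_i,s_i^\ell)\in\simul$ and $(s_i,s_{i+1})\in R_\para$, Simulation provides $s_{i+1}^\ell$ with $(s_{i+1},s_{i+1}^\ell)\in\simul$ and $(s_i^\ell,s_{i+1}^\ell)\in R_\para^\star$. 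Concatenating the finitely many $R_\para^\star$-paths witnessing the steps from $s_0^\ell$ to $s_1^\ell$, then to $s_2^\ell$, and so on up to $s_t^\ell$ (dropping the empty segments where $s_i^\ell=s_{i+1}^\ell$), and using $s_0^\ell\in\init_\para$, yields a genuine trace of $\Tpar$ ending in $s_t^\ell$. Since $R_\para$, hence $R_\para^\star$, preserves the domain, all of these states — including the stitched-in intermediate ones, which lie in $S_\para$ — share the domain $\domain':=\dom(s_0^\ell)$, so this is a trace of the single instance $T_{\domain'}$, and $|\domain'|<|\domain|$. Finally, $s_t\notin P$ together with $(s_t,s_t^\ell)\in\simul$ and Fault Preservation gives $s_t^\ell\notin P$, so $T_{\domain'}\not\models P$, proving the lemma.

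With the descent lemma in hand, I would conclude by well-founded (strong) induction on $|\domain_0|\in\nat$: if $|\domain_0|\le k$, take $\domain=\domain_0$; otherwise the lemma gives a strictly smaller $\domain_1$ with $T_{\domain_1}\not\models P$, and the induction hypothesis applied to $\domain_1$ yields the required instance of size at most $k$. This contradicts the assumed right-hand side of \Cref{def:cutoff}, so $\Tpar\models P$, and $k$ is a cutoff.

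I expect the only delicate point to be the bookkeeping in the descent lemma: checking that the pieced-together sequence is a legitimate trace — correctly handling the reflexive (stuttering) case of $R_\para^\star$, and verifying that every intermediate state inserted along the $R_\para^\star$-segments lies in $S_\para$ and shares the common domain $\domain'$, so that the whole object is a trace of a \emph{single} finite instance. Everything else is a direct unfolding of the four clauses of \Cref{def:size_reducing_sim}.
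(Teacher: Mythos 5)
Your proof is correct and follows essentially the same strategy as the paper's: lift a violating trace element-by-element via Initiation, Simulation, and Fault Preservation, stitch the $R_\para^\star$-segments into a violating trace of a strictly smaller instance, and iterate. The paper packages the well-founded induction as a minimal-size counterexample argument while you phrase it as a one-step descent lemma followed by strong induction on $|\domain_0|$, but these are the same argument; the paper is in fact slightly less explicit than you about verifying that the stitched-in intermediate states remain inside one instance, so your care there is well placed.
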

\begin{proof}
    We show that $\Tpar \models P$ if and only if $T_\domain \models P$ for every $\domain$ with $|\domain|\leq k$. 
    The direction from left to right follows directly from the definition of $\Tpar$. 
    For the opposite direction, suppose $T_\domain \models P$ for every $\domain$ with $|\domain| \leq k$ and
    assume towards a contradiction that $\Tpar \not \models P$. Then there is a reachable state $s$ of $\Tpar$ with $s\notin P$. 
    Take such a reachable state $s^h$ such that  $s^h\notin P$ with minimal $n=|\struct^h|$ (since all the states in $\Tpar$ are finite, minimality is well-defined). Denote $\domain = \dom(s^h)$.
    The trace showing $s^h$ is reachable is a trace of $T_\domain$. If $n\leq k$ we get a contradiction to $T_\domain$ being safe.
   
    Otherwise, $|\struct^h| = n > k$, $s^h$ is reachable so we have a trace $(s^h_i)_{i=0}^t$ with $s^h_t = s^h$. 
    We will construct, by induction on $i$, a reachable state $s^\ell_i$ such that $(s^h_i,s^\ell_i)\in \simul$. For $i=0$ we have $s_0^h\in \init_\para$, by \cref{item:initation} we have a state $s^\ell_0 \in \init_\para$ such that $(s^h_0,s^\ell_0)\in \simul$ and initial states are reachable. Assume we've constructed a reachable state $s^\ell_i$ such that $(s^h_i,s^\ell_i)\in \simul$, for $i+1$ we have $(s^h_i,s^h_{i+1})\in R_\para$. By \cref{item:simulation}  we have a state $s^\ell_{i+1}$ such that $(s^h_{i+1},s^\ell_{i+1})\in \simul$ and $(s^\ell_i,s^\ell_{i+1})\in R_\para^\star$, it follows that $s^\ell_{i+1}$ is reachable. Denote $s^\ell=s^\ell_t$,  $s^\ell$ is reachable and $(s^h,s^\ell)\in \simul$. We have $s^h\notin P$ so by \cref{item:bad}, we have $s^\ell\notin P$. By \cref{item:size}, we have $|\struct^\ell| < |\struct^h|$, in contradiction to minimality of $|\struct^h|$. \qed

\end{proof}

\begin{lemma}\label{lem:strong_is_weak_app}
    A strong size-reducing simulation is a size-reducing simulation. 
\end{lemma}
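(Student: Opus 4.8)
The plan is to verify that each requirement of \Cref{def:size_reducing_sim} follows from the corresponding requirements of \Cref{def:strong_size_reducing_sim}. Let $\simul$ be a strong size-reducing simulation for $\Tpar$, $P$ and $k$; I will show it is a size-reducing simulation. Size Reduction (\cref{item:size}) and Fault Preservation (\cref{item:bad}) are immediate, since they are literally \cref{item:strong_size} and \cref{item:strong_bad} of \Cref{def:strong_size_reducing_sim}. The interesting directions are Initiation (\cref{item:initation}) and Simulation (\cref{item:simulation}), and the key tool for both is the Inductive Totality requirement (\cref{item:strong_inductive_totality}), which supplies the existential witnesses that the weak definition demands.

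For Initiation, take $s^h \in \init_\para$ with $|s^h| > k$. The first bullet of Inductive Totality (\cref{item:strong_inductive_totality}) directly gives some $s^\ell \in S_\para$ with $(s^h,s^\ell) \in \simul$. It remains to check $s^\ell \in \init_\para$; this follows from Strong Initiation (\cref{item:strong_initation}) applied to the pair $(s^h,s^\ell)$, since $s^h \in \init_\para$. So \cref{item:initation} holds.

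For Simulation, take $(s^h,s^\ell) \in \simul$ and $s^{h\prime} \in S_\para$ with $(s^h,s^{h\prime}) \in R_\para$. Since the pair $(s^h,s^\ell)$ witnesses that an $s^\ell$ with $(s^h,s^\ell)\in\simul$ exists, the second bullet of Inductive Totality (\cref{item:strong_inductive_totality}) yields some $s^{\ell\prime} \in S_\para$ with $(s^{h\prime},s^{\ell\prime}) \in \simul$ and $\dom(s^\ell) = \dom(s^{\ell\prime})$. Now apply Strong Simulation (\cref{item:strong_simulation}) to the pairs $(s^h,s^\ell)\in\simul$ and $(s^{h\prime},s^{\ell\prime})\in\simul$: since $(s^h,s^{h\prime}) \in R_\para$ and $\dom(s^\ell) = \dom(s^{\ell\prime})$, we get $(s^\ell,s^{\ell\prime}) \in R_\para$ or $s^\ell = s^{\ell\prime}$. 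In either case $(s^\ell,s^{\ell\prime}) \in R_\para^\star$ (the reflexive transitive closure includes the identity and single steps), so $s^{\ell\prime}$ is the required witness for \cref{item:simulation}.

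Having verified all four clauses, $\simul$ is a size-reducing simulation. I do not expect any genuine obstacle here — the definition of \emph{strong} was engineered precisely so that the universally quantified clauses plus the totality clause recover the alternating quantifiers of the weak definition. The only point requiring a small amount of care is the bookkeeping in the Simulation step, namely that one must first invoke totality to produce $s^{\ell\prime}$ with the matching domain \emph{before} invoking Strong Simulation, since the latter is vacuous without a witness of the right domain; the domain-equality side condition is exactly what makes these two clauses compose.
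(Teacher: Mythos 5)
Your proof is correct and follows exactly the same route as the paper's: dispatch Size Reduction and Fault Preservation as syntactically identical, then use the two bullets of Inductive Totality to produce the existential witnesses before applying Strong Initiation and Strong Simulation respectively. Nothing differs beyond exposition.
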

\begin{proof}
        The properties of size-reduction and fault-preservation are the same in both definitions. We will show that inductive totality along with strong initiation and strong simulation imply initiation and simulation of \Cref{def:size_reducing_sim}.
        \paragraph{Initiation.} Let $s^h\in \init_\para$ with $|s^h|>k$. By inductive totality we have a state $s^\ell\in S_\para$ such that $(s^h,s^\ell)\in H$. By strong initiation we get that $s^\ell\in \init_\para$.
        \paragraph{Simulation.} Let 
        $(s^h,s^\ell)\in H$ and $s^\hpr$ such that 
        $(s^h,s^{h\prime})\in R_\para$. By inductive totality we have a state $s^\ellpr \in S_\para$ such that $(s^\hpr,s^\ellpr)\in H$ and $\dom(s^\ell)=\dom(s^\ellpr)$. By strong simulation it follows that $ (s^\ell,s^\ellpr)\in R_\para$ or $s^\ell = s^\ellpr$, and in either case $(s^\ell,s^\ellpr)\in R_\para^\star$.
    \qed 
\end{proof}

\begin{lemma}\label{lem:size_reduction_app}
    For a formula $\sq(z)$ over $\Sigma^h\uplus\Sigma^\ell$, and a pair $(s^h,s^\ell)\in \simul_\sq$ such that $s^h$ is finite, we have $|s^h| > |s^\ell|$.
\end{lemma}
\begin{proof}
    $s^\ell$ is an $\sq$-substructure of $s^h$, therefore there exists $d_0\in \dom(s^h)$ such that $\dom(s^\ell)=\dom(s^h)\setminus \{d_0\}$. Thus, $|s^\ell | = |s^h| - 1 < |s^h|$. \qed
\end{proof}

\newpage
\noindent For the next proofs we need some notation: for a structure $s=(\domain,\interp)$, a term $t$ with free variables $\Vars(t)$ and an assignment $\assign : \Vars(t)\to \domain$ we denote by $\terminterp{t}{s}{\assign}$ the interpretation of $t$ in $s$ under $\assign$. 
For variables $x_1,\ldots,x_m$ and domain elements $d_1,\ldots,d_m$ we denote by $[\seq x \mapsto \seq d]$ the assignment $\assign$ with $\assign(x_i)=d_i$ for $1\leq i \leq m$. For an assignment $\assign$, a variable $x$, and a domain element $d$, we denote by $\assign[x\mapsto d]$ the assignment that agrees with $\assign$ on all variables except for $x$, which is mapped to $d$.

\begin{lemma}\label{lem:domain_extension_app}
Let $\formula$ be a closed formula over $\Sigma$ and $z\notin \Vars(\formula)$ a variable.
Let $(\domain,\interp)$ be a structure for $\Sigma$, $d_0\in \domain$ and $(\exc\domain,\exc\interp)$  a substructure of $(\domain,\interp)$ such that $\exc \domain = \domain \setminus \{d_0\} $. Then %
    $
    (\exc \domain,\exc \interp) \models \formula \iff (\domain, \interp),[z\mapsto d_0] \models \ADrep{z}\formula.
    $
\end{lemma}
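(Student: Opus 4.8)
\textbf{Proof plan for \Cref{lem:domain_extension_app}.}
The natural approach is induction on the structure of formulas, but since the statement is phrased for a closed formula $\formula$ while quantifiers introduce free variables, I would first generalize the claim to an arbitrary (not necessarily closed) formula $\formula$ with $z \notin \Vars(\formula)$: for every assignment $\assign : \FV(\formula) \to \exc\domain$,
\[
(\exc\domain,\exc\interp),\assign \models \formula \iff (\domain,\interp),\assign[z\mapsto d_0] \models \ADrep{z}{\formula}.
\]
Note that $\assign$ maps into $\exc\domain = \domain\setminus\{d_0\}$, so $\assign[z\mapsto d_0]$ is well-defined precisely because $z$ does not occur in $\formula$ and hence is not in $\FV(\formula)$. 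A preliminary step is a sublemma on terms: for any term $t$ over $\Sigma$ with $\Vars(t)\subseteq \FV(\formula)$ (so all variables are assigned into $\exc\domain$), $\terminterp{t}{(\exc\domain,\exc\interp)}{\assign} = \terminterp{t}{(\domain,\interp)}{\assign[z\mapsto d_0]}$; this follows by induction on $t$ using that $(\exc\domain,\exc\interp)$ is a substructure, so $\exc\interp(f) = \interp(f)|_{\exc\domain^m}$, together with the fact (needed for well-definedness) that the arguments stay inside $\exc\domain$ — which for a plain term amounts to knowing $\interp(f)$ restricted to $\exc\domain^m$ lands in $\exc\domain$; actually, since $(\exc\domain,\exc\interp)$ is \emph{assumed} to be a substructure, $\exc\interp(f)$ is already a total function on $\exc\domain$, so term evaluation in the substructure is well-defined and the agreement is immediate.

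With the term sublemma in hand, the main induction goes case by case. For atomic $\formula = r(t_1,\ldots,t_m)$: $\ADrep{z}{\formula} = \formula$, and $(\exc\domain,\exc\interp),\assign \models r(\seq t)$ iff $(\terminterp{t_i}{\exc s}{\assign})_i \in \exc\interp(r) = \interp(r)\cap\exc\domain^m$; since the tuple lies in $\exc\domain^m$, this holds iff the tuple is in $\interp(r)$, which by the term sublemma is the evaluation condition for $(\domain,\interp),\assign[z\mapsto d_0]$. The case $\formula = (t_1 = t_2)$ is analogous using injectivity of the identity and the term sublemma. Boolean connectives $\neg,\wedge,\vee,\to$ are routine since $\ADrep{z}{\cdot}$ commutes with them and free-variable sets only shrink. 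The interesting cases are the quantifiers. For $\formula = \forall x.\,\formulaa$, we have $\ADrep{z}{\formula} = \forall x.\,(x\neq z)\to \ADrep{z}{\formulaa}$. On the left, $(\exc\domain,\exc\interp),\assign \models \forall x.\,\formulaa$ iff for all $d\in\exc\domain$, $(\exc\domain,\exc\interp),\assign[x\mapsto d]\models\formulaa$. On the right, $(\domain,\interp),\assign[z\mapsto d_0]\models \forall x.\,(x\neq z)\to\ADrep{z}{\formulaa}$ iff for all $d\in\domain$ with $d\neq d_0$ (i.e.\ $d\in\exc\domain$), $(\domain,\interp),\assign[z\mapsto d_0][x\mapsto d]\models\ADrep{z}{\formulaa}$; since $x\neq z$ the two updates commute as $\assign[x\mapsto d][z\mapsto d_0]$, and the induction hypothesis applied to $\formulaa$ with assignment $\assign[x\mapsto d]$ (which still maps into $\exc\domain$) closes the equivalence. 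The $\exists$ case is dual, using $\ADrep{z}{\exists x.\,\formulaa} = \exists x.\,(x\neq z)\wedge\ADrep{z}{\formulaa}$.

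The only genuine subtlety — and the one place to be careful — is bookkeeping around the variable $z$ and the domains of assignments: one must maintain the invariant that the "small" assignment always ranges over $\exc\domain$ so that adding $[z\mapsto d_0]$ never clashes, and that the variable $x$ bound by a quantifier is distinct from $z$ (guaranteed by the hypothesis $z\notin\Vars(\formula)$, which is preserved into subformulas). There is no deep obstacle here; the proof is a standard substructure/relativization argument, and the generalization to open formulas is what makes the induction go through. Applying the generalized statement to the closed $\formula$ with the empty assignment yields the lemma as stated.
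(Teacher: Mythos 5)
Your proposal is correct and follows essentially the same route as the paper's own proof: strengthen to open formulas with an assignment $\assign$ into $\exc\domain$, establish the term sublemma (that evaluation agrees and stays inside $\exc\domain$, using the substructure hypothesis), and then do structural induction with the guarded-quantifier case as the only nontrivial step. The bookkeeping points you flag — that $\assign$ must range over $\exc\domain$, that $x \neq z$ so the assignment updates commute, and that $z \notin \Vars(\formula)$ propagates to subformulas — are exactly the observations the paper makes.
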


\begin{proof}  
    We first strengthen the lemma and allow $\formula$ to be any formula (not necessarily closed). In the same setting, for any assignment $\assign$ of $\FV(\formula)$ to $\exc \domain$ we claim that
    $
    (\exc \domain,\exc \interp),\assign \models \formula \iff (\domain, \interp),\assign[z\mapsto d_0] \models \ADrep{z}\formula.
    $
    
    The proof goes by induction on the structure of $\formula$. First, we notice that if $t$ is a term with $z\notin \Vars(t)$ and $\assign$ is an of assignment of $\Vars(t)$ to $\exc\domain$ then $\terminterp{t}{(\domain,\interp)}{\assign[z\mapsto d_0]} = \terminterp{t}{(\exc \domain, \exc \interp)}{\assign} \in \exc \domain  
$
, which can be shown by induction on the structure of the term, relying on the property that $(\exc \domain,\exc \interp)$ is a substructure of $(\domain, \interp)$,
and hence interpretations of constant and function symbols coincide.
All terms that appear in $\formula$ never have $z$ as a free variable, so it follows that they are interpreted in $(\domain,\interp),\assign[z\mapsto d_0]$ to elements of $\exc\domain$.
    
The case where $\formula = (t_1=t_2)$ follows easily. We use this as well for the case where $\formula = r(t_1,\ldots,t_m)$:
    \begin{align*}
        (\exc \domain,\exc \interp),\assign \models r(t_1,\ldots,t_m) & \iff\\ ( \terminterp{t_1}{(\exc\domain, \exc\interp)}{\assign} ,\ldots, \terminterp{t_m}{(\exc\domain, \exc\interp)}{\assign}) \in \exc\interp(r)& \iff \\
        (\terminterp{t_1}{(\domain, \interp)}{\assign[z\mapsto d_0]} ,\ldots, \terminterp{t_m}{(\domain, \interp)}{\assign[z\mapsto d_0]}) \in  \interp(r) \cap \exc \domain^m& \iff\\
        (\terminterp{t_1}{(\domain, \interp)}{\assign[z\mapsto d_0]} ,\ldots, \terminterp{t_m}{(\domain, \interp)}{\assign[z\mapsto d_0]}) \in  \interp(r) &\iff\\
        (\domain, \interp),[z\mapsto d_0] \models r(t_1,\ldots,t_m)
    \end{align*}  
    First equivalence is the semantics of relations, the second is definition of a substructure and the observation above, the third is the observation above, and the last equivalence is again the semantics of relations.

    For the case where $\formula = \forall x. \formulaa$, then $ \ADrep{z}{\formula} = \forall x.\ (x\neq z) \to \ADrep{z}{\formulaa}$.
    \begin{align*}
        (\exc \domain,\exc \interp),\assign \models \forall x. \formulaa &\iff\\
        \forall d\in \exc\domain. (\exc\domain,\exc\interp),\assign[x\mapsto d]\models \formulaa &\iff\\ 
        \forall d\in \exc\domain. (\domain,\interp),\assign[x\mapsto d,z\mapsto d_0]\models \ADrep{z}{\formulaa} &\iff\\
         \forall d\in \domain. ( \domain, \interp), \assign[z\mapsto d_0,x\mapsto d]\models (x\neq z) \to \ADrep{z}{\formulaa} &\iff\\
          ( \domain, \interp),\assign[z\mapsto d_0] \models \forall x. (x\neq z)\to \ADrep{z}{\formulaa}
    \end{align*}  
    First equivalence is the semantics of universal quantification, the second is the induction hypothesis, the third is seen to be true by dividing into the cases $d = d_0$ and $d\neq d_0$, and noticing $\assign[x\mapsto d,z\mapsto d_0]=\assign[z\mapsto d_0,x\mapsto d]$, as $x$ and $z$ are different variables by the assumption $z\notin \Vars(\formula)$. The last equivalence is again the semantics of universal quantification.

    The cases for negation and conjunction are trivial. Disjunction and Existential quantification follow from the previous cases.

\qed
\end{proof}

    Throughout the following proofs we use abuse of notation thinking of interpretations of states such as $s^h,s^\ell$, which are  interpretations of $\Sigma$ as interpretations of $\Sigma^h,\Sigma^\ell$ and so on, when necessary, as discussed in \Cref{sec:prelims}. 
    We also freely use the fact that if $\alpha$ is a formula over vocabulary $\Sigma^h$ and and $\interp^h,\interp^\ell$ are interpretations for domain $\domain$ and vocabularies $\Sigma^h,\Sigma^\ell$ respectively, such that $(\domain,\interp^h\uplus\interp^\ell)\models \alpha$ then $(\domain,\interp^h)\models \alpha$.

\begin{lemma} \label{lem:vc_simulation_app}
Let $\sq(z)$ be a formula over $\Sigma^h\uplus \Sigma^\ell$. If verification conditions (\ref{item:vc-iota})--(\ref{item:vc-safety}) 
of \Cref{fig:verificationsconditions} are valid then $\simul_{\sq}\cap (S_\para \times S_\para)$ satisfies \cref{item:strong_initation,item:strong_simulation,item:strong_bad} of \Cref{def:strong_size_reducing_sim}.
\end{lemma}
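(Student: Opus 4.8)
The plan is to prove each of the three requirements (strong initiation, strong simulation, fault preservation) separately, each one essentially by unfolding the definition of $\simul_\sq$, translating the membership claims about the low states into satisfaction of $z$-excluded representations via \Cref{lem:domain_extension}, and then invoking the appropriate verification condition. Throughout, I will use the abuse of notation identifying interpretations of $\Sigma$ with interpretations of $\Sigma^h$ or $\Sigma^\ell$, and the fact that if $(\domain,\interp^h\uplus\interp^\ell)\models\alpha$ for a formula $\alpha$ over $\Sigma^h$, then $(\domain,\interp^h)\models\alpha$ (and symmetrically for $\Sigma^\ell$). Fix a pair $(s^h,s^\ell)\in\simul_\sq\cap(S_\para\times S_\para)$; by \Cref{def: relation_from_sq} there is a structure $(\domain,\interp^h\uplus\interp^\ell)$ and an element $d_0\in\domain$ with $(\domain,\interp^h\uplus\interp^\ell),[z\mapsto d_0]\models\sq(z)$, such that $s^h=(\domain,\interp^h)$ and $s^\ell=(\exc\domain,\exc{\interp^\ell})$ is a substructure of $(\domain,\interp^\ell)$ with $\exc\domain=\domain\setminus\{d_0\}$.

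\emph{Strong initiation (\cref{item:strong_initation}).} Assume $s^h\in\init_\para$, i.e.\ $s^h\models\Gamma$ and $s^h\models\iota$. Then $(\domain,\interp^h)\models\Gamma^h\wedge\iota^h$ (renaming symbols to their $h$-copies), and together with $(\domain,\interp^h\uplus\interp^\ell),[z\mapsto d_0]\models\sq(z)$ this makes the antecedent of verification condition (\ref{item:vc-iota}) true, so the consequent $\ADrep z \iota ^\ell$ holds, i.e.\ $(\domain,\interp^\ell),[z\mapsto d_0]\models\ADrep z\iota$. Since $z\notin\Vars(\iota)$ and $s^\ell$ is the substructure obtained by deleting $d_0$, \Cref{lem:domain_extension} gives $s^\ell\models\iota$. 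As $s^\ell\in S_\para$ by assumption, we conclude $s^\ell\in\init_\para$. Fault preservation (\cref{item:strong_bad}) is entirely analogous: if $s^h\notin P$, i.e.\ $s^h\models\neg\safetyspec$, then the antecedent of (\ref{item:vc-safety}) holds, the consequent gives $(\domain,\interp^\ell),[z\mapsto d_0]\models\ADrep z{\neg\safetyspec}$, and \Cref{lem:domain_extension} yields $s^\ell\models\neg\safetyspec$, i.e.\ $s^\ell\notin P$.

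\emph{Strong simulation (\cref{item:strong_simulation}).} This is the main case and the place I expect the bookkeeping to be heaviest, because it involves two pairs and a transition. Take pairs $(s^h,s^\ell)\in\simul_\sq\cap(S_\para\times S_\para)$ and $(s^{h\prime},s^{\ell\prime})\in\simul_\sq\cap(S_\para\times S_\para)$ with $(s^h,s^{h\prime})\in R_\para$ and $\dom(s^\ell)=\dom(s^{\ell\prime})$. The witnessing structures give elements $d_0$ for the first pair and $d_0'$ for the second; the key point is that $\dom(s^\ell)=\dom(s^{\ell\prime})$ forces $\domain\setminus\{d_0\}=\domain\setminus\{d_0'\}$ (note the high domains agree since $(s^h,s^{h\prime})\in R_\para$), hence $d_0=d_0'$, so the same assignment $[z\mapsto d_0]$ works for both $\sq(z)$ and $\sq'(z)$. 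Now assemble the antecedent of verification condition (\ref{item:vc-tau}): $(s^h,s^{h\prime})\in R_\para$ gives $s^h\models\Gamma$, $s^{h\prime}\models\Gamma$, and $(\domain,\interp^h_{s^h}\uplus\interp^{h\prime}_{s^{h\prime}})\models\tau^h$ (identifying the pre-state with $\Sigma^h$ and the post-state with $\Sigma^{h\prime}$); combined with $\sq(z)$ for the pre-pair and $\sq'(z)$ for the post-pair, all evaluated at $[z\mapsto d_0]$, the antecedent holds. Hence the consequent $\ADrep z{\tau\vee\idle}^\ell$ holds at $[z\mapsto d_0]$ in the appropriate $\Sigma^\ell\uplus\Sigma^{\ell\prime}$ structure. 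Applying \Cref{lem:domain_extension} (to the formula $\tau\vee\idle$, which has $z$ not among its variables by the w.l.o.g.\ renaming assumption, over the combined vocabulary, with the substructure obtained by deleting $d_0$) we get that the substructure satisfies $\tau\vee\idle$, i.e.\ $(\exc\domain,\exc{\interp^\ell}\uplus\exc{\interp^{\ell\prime}})\models\tau\vee\idle$. If the $\tau$ disjunct holds this says exactly $(s^\ell,s^{\ell\prime})\in R_\para$; if the $\idle$ disjunct holds it says $s^\ell$ and $s^{\ell\prime}$ agree on all relation and function symbols, and since they share a domain this means $s^\ell=s^{\ell\prime}$. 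Either way \cref{item:strong_simulation} is satisfied. The main obstacle, as indicated, is the care needed in matching up vocabularies ($\Sigma^h,\Sigma^{h\prime},\Sigma^\ell,\Sigma^{\ell\prime}$ and their unprimed/primed copies) and in justifying $d_0=d_0'$ so that a single $z$-assignment serves throughout; the logical content beyond that is a direct appeal to the verification conditions and \Cref{lem:domain_extension}.\qed
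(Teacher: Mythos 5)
Your proof is correct and follows essentially the same approach as the paper: unfold \Cref{def: relation_from_sq}, invoke the relevant verification condition, and translate back via \Cref{lem:domain_extension}. The only cosmetic difference is that you spell out the argument that $d_0=d_0'$ for the two pairs in the strong-simulation case (using $\dom(s^h)=\dom(s^{h\prime})$ and $\dom(s^\ell)=\dom(s^{\ell\prime})$), whereas the paper leaves this step implicit when unfolding the definition to a single $d_0$.
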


\begin{proof}
    We prove strong initiation and strong simulation, fault preservation is similar to strong initiation. 
    
    \paragraph{Strong Initiation:} Let $(s^h,s^\ell)\in \simul_\sq \cap (S_\para \times S_\para)$ such that $s^h \in \init_\para$, we show that $s^\ell \in \init_\para$.
    From \Cref{def: relation_from_sq} we have $s^h = (\domain,\interp^h)$, $s^\ell = (\exc \domain, \exc \interp^\ell)$ and $d_0\in \domain$ such that $\exc \domain = \domain \setminus \{d_0\}$ and an interpretation $\interp^\ell$ such that $(\exc\domain,\exc\interp^\ell)$ is a substructure of $(\domain,\interp^\ell)$ and $(\domain,\interp^h\uplus \interp^\ell),[z\mapsto d_0] \models \sq(z)$. $s^h$ is an initial state so we have $(\domain,\interp^h)\models \Gamma^h \wedge \iota^h$. By validity of verification condition (\ref{item:vc-iota}), we have $(\domain,\interp^\ell),[z\mapsto d_0]\models \ADrep z \iota^\ell$. From \Cref{lem:domain_extension_app} we get $(\exc \domain, \exc\interp^\ell) \models \iota^\ell$ and so $s^\ell \in \init_\para$ as required.
        
    \paragraph{Strong Simulation:}
    Let $(s^h,s^\ell), (s^\hpr,s^\ellpr)\in \simul_\sq \cap (S_\para\times S_\para)$
    such that $(s^h,s^\hpr)\in R_\para$ and $\dom(s^\ell)=\dom(s^\ellpr)$. We show that $(s^\ell,s^\ellpr)\in R_\para$ or $s^\ell = s^\ellpr $.
    From $(s^h,s^\hpr)\in R_\para$ it follows that $\dom(s^h)=\dom(s^\hpr)$. Along with $\dom(s^\ell)=\dom(s^\ellpr)$, unfolding  \Cref{def: relation_from_sq} gives:
    a high domain $\domain$ and low domain $\exc \domain$ with $d_0\in\domain$ such that $\exc \domain = \domain \setminus \{d_0\}$, and interpretations:  $\interp^h,\interp^\ell,\exc \interp^\ell, \interp^{h\prime},\interp^{\ell\prime}, \exc \interp^{\ell\prime}$ such that:
    $(\exc \domain, \exc \interp^{\ell})$ is a substructure of $(\domain,\interp^{\ell})$ and $(\exc \domain, \exc \interp^{\ell\prime})$ is a substructure of $(\domain,\interp^{\ell\prime})$ and 
    $(\domain,\interp^h\uplus\interp^\ell),[z\mapsto d_0]\models \sq(z)$, 
    $(\domain,\interp^h\uplus\interp^{\ell\prime}),[z\mapsto d_0]\models \sq'(z)$.
    By definition of states and transitions we have:
    $(\domain,\interp^h) \models \Gamma^h,(\domain,\interp^{h\prime}) \models \Gamma^{h\prime}$ and $(\domain, \interp^h \uplus \interp^{h\prime} )\models \tau^h $. By verification condition (\ref{item:vc-tau}) we have:
    $(\domain, \interp^\ell \uplus \interp^{\ell\prime}),[z\mapsto d_0] \models \ADrep z {(\tau\vee \idle)^\ell} $. 
    By \Cref{lem:domain_extension_app} we get: $(\exc \domain, \exc \interp^\ell \uplus \exc \interp^{\ell\prime}) \models (\tau\vee \idle)^\ell $. It follows that $(s^\ell,s^{\ell\prime})\in R_\para$ or $s^\ell = s^{\ell\prime}$. \qed
\end{proof}

\begin{lemma}\label{lem:totality_app}
    Let $\sq(z)$ be a high-low update with precondition $\theta(z)$ and $(\domain,\interp^h)$  a structure with $d_0\in \domain$ such that $(\domain,\interp^h),[z\mapsto d_0]\models \cond(z)$. Then there exists an interpretation $\interp^\ell$ such that 
    $(\domain,\interp^h\uplus\interp^\ell),[z\mapsto d_0] \models \sq(z)$.\end{lemma}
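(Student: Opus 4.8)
The plan is to read the low interpretation $\interp^\ell$ directly off the update formulas and update terms that constitute the high-low update $\sq(z)$, and then check that every conjunct of $\sq(z)$ holds. Concretely, for each relation symbol $r\in\Sigma$ of arity $m$ I would define
\[
\interp^\ell(r) = \{(d_1,\dots,d_m)\in\domain^m \mid (\domain,\interp^h),[z\mapsto d_0,\seq x\mapsto\seq d]\models\formula_r(\seq x,z)\},
\]
and for each function symbol $f\in\Sigma$ of arity $m$ I would define
\[
\interp^\ell(f)(d_1,\dots,d_m) = \terminterp{t_f}{(\domain,\interp^h)}{[z\mapsto d_0,\seq x\mapsto\seq d]}.
\]
Since $t_f$ is a term, its value in $(\domain,\interp^h)$ under any assignment is a well-defined element of $\domain$, so $\interp^\ell(f)$ is a genuine total function on $\domain^m$; this is the one place where the syntactic shape of a high-low update — a term, not an arbitrary formula, for each function symbol — is used.

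Next I would observe that every $\formula_r$ and every $t_f$ is a formula (respectively, term) over $\Sigma^h$ alone, so by a trivial induction on its structure its truth value (respectively, its value) in $(\domain,\interp^h\uplus\interp^\ell)$ agrees with that in $(\domain,\interp^h)$, under any assignment. From this the three families of conjuncts follow. For each relation symbol $r$ and tuple $\seq d$, the structure $(\domain,\interp^h\uplus\interp^\ell)$ with $[z\mapsto d_0,\seq x\mapsto\seq d]$ satisfies $r^\ell(\seq x)$ iff $\seq d\in\interp^\ell(r)$ iff it satisfies $\formula_r(\seq x,z)$ there; hence $(\domain,\interp^h\uplus\interp^\ell),[z\mapsto d_0]\models\forall\seq x.\ r^\ell(\seq x)\leftrightarrow\formula_r(\seq x,z)$. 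The conjunct $\forall\seq x.\ f^\ell(\seq x)=t_f(\seq x,z)$ holds at $[z\mapsto d_0]$ directly by the definition of $\interp^\ell(f)$. Finally $\cond(z)$, being a formula over $\Sigma^h$, holds in $(\domain,\interp^h\uplus\interp^\ell)$ under $[z\mapsto d_0]$ precisely because it holds in $(\domain,\interp^h)$ under $[z\mapsto d_0]$, which is the hypothesis. Conjoining the three parts gives $(\domain,\interp^h\uplus\interp^\ell),[z\mapsto d_0]\models\sq(z)$, as required.

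I do not expect a genuine obstacle here: the statement is essentially a well-definedness check, confirming that the deterministic-update syntax of \Cref{def:mu-update} does pin down a low interpretation over the same domain. The only two points that warrant care are (i) the totality of $\interp^\ell(f)$, which comes for free because function updates are terms, and (ii) the elementary observation that the truth value of a $\Sigma^h$-formula (or the value of a $\Sigma^h$-term) is insensitive to the added $\Sigma^\ell$-part, so that the defining equivalences above are not circular. The genuinely nontrivial demands — that deleting $d_0$ from $\domain$ still leaves a well-defined substructure, so that an $\sq$-substructure exists in the sense of \Cref{def: relation_from_sq}, and that this substructure models $\Gamma$ — are not part of this lemma and are discharged separately by verification conditions (\ref{item:vc-project}) and (\ref{item:vc-gamma}).
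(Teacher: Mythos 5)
Your construction of $\interp^\ell$ from the update formulas $\formula_r$ and update terms $t_f$ is exactly the one the paper uses, and the paper likewise declares the resulting satisfaction of $\sq(z)$ "trivial" where you spell it out. Your proposal is correct and follows essentially the same approach; the extra observations about totality of $\interp^\ell(f)$ and the $\Sigma^h$-only dependence of $\formula_r$, $t_f$, $\cond$ are just the details the paper leaves implicit.
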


\begin{proof}
    We define the interpretation 
$\interp^\ell$ by: for every relation $r\in \Sigma$  define $\interp^\ell(r^\ell) = \{ \seq d \mid (\domain,\interp^h),[z\mapsto d_0, \seq x\mapsto \seq d] \models \formula_r(\seq x,z)\} $, and for every function $f\in \Sigma$ define:
    $ \interp^\ell(f^\ell)(\seq d) = 
    {(t_f)}^{(\domain,\interp^h), [z\mapsto d_0, \seq x\mapsto \seq d]} $.   
    That we have $(\domain,\interp^h\uplus \interp^\ell),[z\mapsto d_0]\models \sq(z)$ is now trivial.\qed
\end{proof}

\begin{lemma}\label{lem:vc_inductive_totality_app}
    Let $\sq(z)$ be a high-low update with precondition $\theta(z)$. If verification conditions (\ref{item:vc-project})--(\ref{item:vc-muconsec}) of 
    \Cref{fig:verificationsconditions} are valid,  
    then $\simul_\sq \cap (S_\para \times S_\para)$ satisfies \cref{item:strong_inductive_totality} of \Cref{def:strong_size_reducing_sim}.
\end{lemma}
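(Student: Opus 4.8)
The plan is to establish the two bullets of inductive totality (\cref{item:strong_inductive_totality} of \Cref{def:strong_size_reducing_sim}) separately. In each case I will exhibit the required low state via the construction of \Cref{lem:totality_app}, turn it into an $\sq$-substructure using projectability, and check that the result actually lies in $S_\para$ using $\Gamma$-preservation; the whole argument is short and amounts to plugging the relevant structures into the four verification conditions (\ref{item:vc-project})--(\ref{item:vc-muconsec}).

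For the first bullet, take $s^h=(\domain,\interp^h)\in\init_\para$ with $|\domain|>k$. Then $(\domain,\interp^h)\models\Gamma^h\wedge\iota^h$, and since $|\domain|\geq k+1$ we also have $(\domain,\interp^h)\models\size$, so verification condition (\ref{item:vc-muinit}) yields some $d_0\in\domain$ with $(\domain,\interp^h),[z\mapsto d_0]\models\cond(z)$. By \Cref{lem:totality_app} there is an interpretation $\interp^\ell$ with $(\domain,\interp^h\uplus\interp^\ell),[z\mapsto d_0]\models\sq(z)$, and since $\Gamma^h$ already holds this gives $(\domain,\interp^h\uplus\interp^\ell),[z\mapsto d_0]\models\Gamma^h\wedge\sq(z)$. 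Now verification condition (\ref{item:vc-project}) gives $(\domain,\interp^\ell),[z\mapsto d_0]\models\closure z^\ell$; unwinding $\closure z=\bigwedge_f\closed z f$, this is precisely the statement that $\exc\domain:=\domain\setminus\{d_0\}$ is closed under every function of $\interp^\ell$, i.e.\ the well-definedness hypothesis that \Cref{def: relation_from_sq} needs in order to supply a substructure $\exc\interp^\ell$ of $\interp^\ell$ on $\exc\domain$. Hence $s^\ell:=(\exc\domain,\exc\interp^\ell)$ is an $\sq$-substructure of $s^h$ and $(s^h,s^\ell)\in\simul_\sq$. Finally, $s^\ell$ is finite since $\domain$ is, and verification condition (\ref{item:vc-gamma}) gives $(\domain,\interp^\ell),[z\mapsto d_0]\models\ADrep z\Gamma^\ell$, so by \Cref{lem:domain_extension_app} we get $s^\ell\models\Gamma$; thus $s^\ell\in S_\para$ and $(s^h,s^\ell)\in\simul_\sq\cap(S_\para\times S_\para)$.

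For the second bullet, take $(s^h,s^{h\prime})\in R_\para$, so $s^h=(\domain,\interp^h)$ and $s^{h\prime}=(\domain,\interp^{h\prime})$ share the domain, $(\domain,\interp^h\uplus\interp^{h\prime})\models\tau$ (identifying $\interp^{h\prime}$ with the primed copy), and $s^h\models\Gamma$, $s^{h\prime}\models\Gamma$. Assume there is $s^\ell$ with $(s^h,s^\ell)\in\simul_\sq\cap(S_\para\times S_\para)$; by \Cref{def: relation_from_sq} there are $d_0\in\domain$ and $\interp^\ell$ with $(\domain,\interp^h\uplus\interp^\ell),[z\mapsto d_0]\models\sq(z)$ and $\dom(s^\ell)=\domain\setminus\{d_0\}$, and since $\cond(z)$ is a conjunct of $\sq(z)$ over $\Sigma^h$, $(\domain,\interp^h),[z\mapsto d_0]\models\cond(z)$. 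The hypothesis of verification condition (\ref{item:vc-muconsec}), namely $\Gamma^h\wedge\Gamma^{h\prime}\wedge\cond(z)\wedge\tau^h$, is then satisfied by $(\domain,\interp^h\uplus\interp^{h\prime})$ under $[z\mapsto d_0]$, so it yields $(\domain,\interp^{h\prime}),[z\mapsto d_0]\models\cond(z)$. It is exactly the reuse of the \emph{same} variable $z$ in $\cond(z)$ and $\cond'(z)$ in this verification condition that lets us delete the \emph{same} element $d_0$ from the post-state. From here the first-bullet argument applies verbatim with $s^{h\prime}$ in place of $s^h$: \Cref{lem:totality_app} together with (\ref{item:vc-project}) and (\ref{item:vc-gamma}) produces an $\sq$-substructure $s^{\ell\prime}=(\domain\setminus\{d_0\},\exc\interp^{\ell\prime})$ of $s^{h\prime}$ with $s^{\ell\prime}\in S_\para$, so $(s^{h\prime},s^{\ell\prime})\in\simul_\sq\cap(S_\para\times S_\para)$, and $\dom(s^{\ell\prime})=\domain\setminus\{d_0\}=\dom(s^\ell)$, as required.

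I do not expect a serious obstacle here; the points that need care are the bookkeeping identifications between interpretations and the appropriate vocabulary copies $\Sigma^h,\Sigma^{h\prime},\Sigma^\ell$ (licensed by \Cref{sec:prelims}), and spelling out that the projectability condition (\ref{item:vc-project}) is precisely the hypothesis under which \Cref{def: relation_from_sq} yields a substructure for the chosen $d_0$, so that $\simul_\sq$ is not vacuous there. The one structurally essential observation — where the exact form of a verification condition is used rather than just its informal meaning — is that (\ref{item:vc-muconsec}) ties $\cond$ and $\cond'$ through the same $z$, which is what delivers the domain-equality clause $\dom(s^\ell)=\dom(s^{\ell\prime})$ that inductive totality demands.
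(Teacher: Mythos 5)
Your proof is correct and follows essentially the same route as the paper's: for the initiation bullet, deduce $\size$ from $|\domain|>k$, invoke (\ref{item:vc-muinit}) to get a witness $d_0$ satisfying $\cond(z)$, build $\interp^\ell$ via \Cref{lem:totality_app}, then use (\ref{item:vc-project}) to justify the substructure and (\ref{item:vc-gamma}) with \Cref{lem:domain_extension_app} to place $s^\ell$ in $S_\para$; for the consecution bullet, extract $d_0$ and $\cond(z)$ from the existing pair, propagate to $\cond'(z)$ via (\ref{item:vc-muconsec}), and reuse the same $d_0$ (hence the same $z$) to guarantee $\dom(s^\ell)=\dom(s^{\ell\prime})$. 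You also correctly flag the structurally essential point — the shared $z$ in $\cond(z)$ and $\cond'(z)$ — which the paper highlights for exactly the same reason.
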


\begin{proof}
We show inductive totality.
\paragraph{Initial States.} For an initial state $s^h \in \init_\para$ with $|s^h|>k$, we show the existence of $s^\ell$ such that $(s^h,s^\ell)\in \simul_\sq \cap (S_\para\times S_\para)$. We have $s^h \models \Gamma^h \wedge \iota^h \wedge \size$.
From verification condition (\ref{item:vc-muinit}) we get $s^h\models \exists z. \cond(z)$.
Denote $s^h = (\domain, \interp^h)$, then there is $d_0\in \domain$ such that $s^h,[z\mapsto d_0]\models \cond(z)$. From \Cref{lem:totality_app} there is an interpretation $\interp^\ell$ such that $(\domain,\interp^h\uplus \interp^\ell),[z\mapsto d_0] \models \sq(z)$.
Define the domain $\exc \domain = \domain \setminus \{d_0\}$ and the interpretation $\exc \interp^\ell$ by: for every relation symbol $r\in \Sigma$ with arity $m$: $\exc \interp^\ell(r^\ell) = \interp^\ell(r^\ell)\cap \exc \domain^m $ and for every function symbol $f\in \Sigma$ with arity $m$: $\exc \interp^\ell (f^\ell) = \interp^\ell(f^\ell) |_{\exc \domain^m}$. By verification condition (\ref{item:vc-project}), we have $(\domain,\interp^\ell) \models \closed z f$ and so $\exc \interp^\ell(f)$ is a function to $\exc \domain^m$ for all $f\in \Sigma$ so $s^\ell=(\exc \domain,\exc\interp^\ell)$ is well-defined as a structure. From the above we get $(s^h,s^\ell)\in \simul_\sq$. By verification condition (\ref{item:vc-gamma}) we have $(\domain,\interp^\ell)\models \ADrep z {\Gamma}^\ell$.
From \Cref{lem:domain_extension_app} we get $(\exc \domain, \exc \interp^\ell) \models \Gamma^\ell$, and so $s^\ell \in S_\para$, and we have $(s^h,s^\ell)\in \simul_\sq \cap (S_\para\times S_\para)$.
    
\paragraph{Transitions.} Assume $(s^h,s^\hpr)\in R_\para$ and $s^\ell$ such that $(s^h,s^\ell)\in \simul_\sq \cap (S_\para \times S_\para)$. We show the existence of $s^\ellpr$ such that $(s^\hpr,s^\ellpr)\in \simul_\sq \cap (S_\para \times S_\para)$. Denote $s^h=(\domain,\interp^h), s^\ell =( \exc \domain,\exc \interp^\ell)$. From \Cref{def: relation_from_sq}, we get $d_0\in\domain$ such that $\exc \domain = \domain \setminus \{d_0\}$ and an interpretation $\interp^\ell$ such that $(\domain,\interp^h \uplus \interp^\ell),[z\mapsto d_0],\models \sq(z)$. It follows that $(\domain,\interp^h),[z\mapsto d_0] \models \cond(z)$. Denote $s^\hpr = (\domain, \interp^\hpr)$, we then have $(\domain,\interp^h\uplus\interp^\hpr), [z\mapsto d_0] \models \Gamma^h \wedge \Gamma^\hpr \wedge \tau^h \wedge \cond(z)$. From verification condition (\ref{item:vc-muconsec}) we have $(\domain,\interp^\hpr),[z\mapsto d_0]\models \cond'(z)$. From \Cref{lem:totality_app} we get an interpretation $\interp^\ellpr$ such that $(\domain,\interp^\hpr\uplus\interp^\ellpr),[z\mapsto d_0] \models \sq(z)$, from which we construct an interpretation $\exc\interp^\ellpr$ similarly to the previous item. We then have $s^\ellpr=(\exc\domain,\exc\interp^\ellpr) \in S_\para$, $\dom(s^\ell)=\dom(s^\ellpr)=\exc \domain$ and finally $(s^\hpr,s^\ellpr)\in \simul_\sq\cap (S_\para\times S_\para)$. \qed
\end{proof}

\begin{theorem}\label{theo:summary_app}
If all the verification conditions for a high-low update $\sq(z)$, $\Tspec$, $\safetyspec$ and $k$ 
(\Cref{fig:verificationsconditions}) are valid, then 
$\simul_\sq\cap (S_\para\times S_\para)$ is a strong size-reducing simulation for $\Tpar$, $P$ and $k$. Consequently, $k$ is a cutoff for $\Tpar$ and $P$.
\end{theorem}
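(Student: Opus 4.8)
The plan is to assemble the theorem directly from the lemmas already established, since each verification condition in \Cref{fig:verificationsconditions} was engineered to discharge exactly one requirement of \Cref{def:strong_size_reducing_sim}. First I would dispatch Size Reduction (\cref{item:strong_size}): for any pair $(s^h,s^\ell)\in\simul_\sq\cap(S_\para\times S_\para)$, the high state $s^h$ is finite because it lies in $S_\para$, so \Cref{lem:size_reduction_app} yields $|s^\ell|<|s^h|$. This is the one point where intersecting with $S_\para\times S_\para$ is genuinely needed for this clause --- $\simul_\sq$ by itself is not size-reducing on infinite structures.

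Next I would invoke \Cref{lem:vc_simulation_app}: since verification conditions (\ref{item:vc-iota})--(\ref{item:vc-safety}) are assumed valid, the lemma gives Strong Initiation, Strong Simulation and Fault Preservation (\cref{item:strong_initation,item:strong_simulation,item:strong_bad}) for $\simul_\sq\cap(S_\para\times S_\para)$. Then, using that $\sq(z)$ is a high-low update (so that \Cref{lem:totality_app} applies) together with the validity of the remaining verification conditions (\ref{item:vc-project})--(\ref{item:vc-muconsec}), \Cref{lem:vc_inductive_totality_app} supplies Inductive Totality (\cref{item:strong_inductive_totality}). Having verified all five clauses, $\simul_\sq\cap(S_\para\times S_\para)$ is a strong size-reducing simulation for $\Tpar$, $P$ and $k$.

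For the final claim, I would apply \Cref{lem:strong_is_weak_app} to conclude that $\simul_\sq\cap(S_\para\times S_\para)$ is in particular a (plain) size-reducing simulation, and then \Cref{theorem:soundness_app} immediately gives that $k$ is a cutoff for $\Tpar$ and $P$. I do not expect a real obstacle in this step: the content of the theorem is the careful matching of verification conditions to requirements, which has already been carried out in \Cref{lem:vc_simulation_app,lem:vc_inductive_totality_app}; the substantive arguments live in those lemmas and in the machinery behind them --- the $z$-excluded representation together with \Cref{lem:domain_extension_app}, and \Cref{lem:totality_app}, which exploits the syntactic shape of a high-low update to realize the existential quantification over low states. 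The only bookkeeping that requires attention in the assembly itself is to keep the restriction to $S_\para\times S_\para$ (equivalently, to finite domains) threaded consistently through all the invoked statements, which are already phrased for it.
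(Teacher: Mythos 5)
Your proof is correct and follows essentially the same route as the paper, whose own proof is a one-line citation of \Cref{theorem:soundness_app,lem:size_reduction_app,lem:vc_simulation_app,lem:vc_inductive_totality_app}; you simply spell out which lemma discharges which clause of \Cref{def:strong_size_reducing_sim}. You also explicitly invoke \Cref{lem:strong_is_weak_app} to bridge from a strong size-reducing simulation to a size-reducing one before applying \Cref{theorem:soundness_app}, a step the paper's terse citation leaves implicit.
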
\begin{proof}   
    Corollary of \Cref{theorem:soundness_app,lem:size_reduction_app,lem:vc_simulation_app,lem:vc_inductive_totality_app}. \qed

\end{proof}

\end{document}